\newif\iflong
\newif\ifshort
\newif\ifcomment
\theoremstyle{plain}
\newtheorem{lemma}{Lemma}
\newtheorem{theorem}{Theorem}
\newtheorem{corollary}{Corollary}
\theoremstyle{definition}
\newtheorem{definition}{Definition}
\colorlet{commentcolour}{green!50!black}
\definecolor{magenta}{rgb}{0.79,0.08,0.48}
\definecolor{darkgreen}{rgb}{0.0, 0.5, 0.0}
\definecolor{darkblue}{rgb}{0.0, 0.06, 0.54}
\definecolor{pistachio}{rgb}{0.58, 0.77, 0.45}
\definecolor{cornflowerblue}{rgb}{0.39, 0.58, 0.93}
\definecolor{dandelion}{rgb}{0.94, 0.88, 0.19}
\newcommand{\variableStyle}[1]{{\mathsf{#1}}}
\newcommand{\Set}[1]{\left\lbrace #1 \right\rbrace}
\newcommand{\quantify}{:}
\newcommand{\NP}{\variableStyle{NP}}
\newcommand{\DTIME}{\variableStyle{DTIME}}
\newcommand{\coNP}{\variableStyle{coNP}}
\newcommand{\WClass}[1]{\variableStyle{W}[#1]}
\newcommand{\FPT}{\variableStyle{FPT}}
\newcommand{\coNPPoly}{\coNP/\variableStyle{poly}}
\newcommand{\trees}{\variableStyle{trees}}
\newcommand{\VC}{\variableStyle{VC}}
\newcommand{\N}{\mathds{N}}
\newcommand{\Solution}{\mathfrak{P}}
\newcommand{\Instance}[1]{\mathcal{#1}}
\newcommand{\Probl}[1]{\textsc{#1}}
\newcommand{\MSElong}{\Probl{Minimum Shared Edges}}
\newcommand{\DMSElong}{\Probl{Directed Minimum Shared Edges}}
\newcommand{\diamG}[1]{\ifx\relax #1\relax
	\variableStyle{diam}
\else
	\variableStyle{diam}_{#1}
\fi} %
\newcommand{\distG}[3][G]{\variableStyle{dist}_{#1}(#2,#3)} %
\renewcommand{\deg}[1]{\variableStyle{deg}(#1)}
\newcommand{\Abs}[1]{\left| #1 \right|}
\newcommand{\Floor}[1]{\left\lfloor #1 \right\rfloor}
\newcommand{\Ceil}[1]{\left\lceil #1 \right\rceil}
\newcommand{\Probability}[2]{\ifx\relax#2\relax
	\operatorname{Pr}^{#1}
\else
	\operatorname{Pr}^{#1}\left[#2\right]
\fi}
\newcommand{\inlinefrac}[2]{\frac{#1}{#2}}
\newcommand{\defDecprob}[3]{
	\begin{center}
		\begin{minipage}{0.95\textwidth}
			\noindent
			\textsc{#1}\\
			\setlength{\tabcolsep}{3pt}
			\begin{tabularx}{\textwidth}{@{}lX@{}}
					\normalsize \textbf{Input:} 		& \normalsize #2 \\
					\normalsize \textbf{Question:} 	& \normalsize #3
				\end{tabularx}
		\end{minipage}
	\end{center}
}
\newcommand{\grid}[2]{G_{#1\times#2}}
\newcommand{\rimDist}[2]{\partial_{#1} #2}
\newcommand{\rimDistDual}[2]{\overline{\partial}_{#1} #2}
\newcommand{\Path}{\operatorname{P}}
\newcommand{\rainbow}[2]{
		\node (start) [circle, scale=0.3,fill] at #1 {};
		\path (start) edge ($(start) + (3*#2/8,0)$);
		\node (end) [circle,scale=0.3,fill] at ($(start)+(#2,0)$) {};
		\path ($(end) - (3*#2/8,0)$) edge (end);
		\foreach\i in {1,...,3}
		{
			\path ($(start)+(\i*#2/8,0)$) edge [out=90,in=90,bend angle=180,looseness=1.7] ($(end)+(-\i*#2/8,0)$);
		}
	}
\newcommand{\VCgrid}[4]{
\begin{scope}[shift={#1},xscale={#2},yscale={#3}]

		\ifthenelse{#4=1 \or #4=2 \or #4=3}
		{\node [circle, scale=0.6,fill,green] at (0,3) {};}{}	
		\ifthenelse{#4=2}
		{\path [draw,green,line width=3pt] (0,3) -- (1,3);}{}
		\ifthenelse{#4=2}
		{\path [draw,green,line width=3pt] (3,3) -- (4,3);}{}
		\ifthenelse{#4=3}
		{\path [draw,red,line width=3pt] (1,3) -- (2,3);}{}
		\ifthenelse{#4=3}
		{\path [draw,red,line width=3pt] (2,3) -- (3,3);}{}
		
		\ifthenelse{#4=4}
		{
			\draw[snake=snake,green, line width=3pt] (0,2) -- (0,3);
			\path [draw,green,line width=3pt] (0,2) -- (1,2);
			\path [draw,green,line width=3pt] (1,2) -- (2,2);
			\draw[snake=snake,green, line width=3pt] (2,1) -- (2,2);
			\path [draw,green,line width=3pt] (2,1) -- (3,1);
			\path [draw,green,line width=3pt] (3,1) -- (4,1);
			\draw[snake=snake,green, line width=3pt] (4,0) -- (4,1);
			
		}{}
		
		\foreach\i in {0,...,4} 
			\foreach\j in {0,...,3}\node [circle, scale=0.3,fill] at (\i,\j) {};
		\foreach\i in {1,...,4} \node at (-0.3,3.83-\i) {$v'_{\i,1}$};
		\foreach\i in {1,...,4} \node at (\i-0.5,-0.3) {$e_{\i}$};
		\path [draw] (0,3) -- (1,3);
		\path [draw] (1,3) -- (1.2,3);
		\rainbow{(1.2,3)}{0.8};
		\path [draw] (2,3) -- (2.2,3);
		\rainbow{(2.2,3)}{0.8};
		\path [draw] (3,3) -- (4,3);
		\path [draw] (0,2) -- (1,2);
		\path [draw] (1,2) -- (2,2);
		\path [draw] (2,2) -- (2.2,2);
		\rainbow{(2.2,2)}{0.8};
		\path [draw] (3,2) -- (3.2,2);
		\rainbow{(3.2,2)}{0.8};
		\path [draw] (0,1) -- (0.2,1);
		\rainbow{(0.2,1)}{0.8};
		\path [draw] (1,1) -- (2,1);
		\path [draw] (2,1) -- (3,1);
		\path [draw] (3,1) -- (4,1);
		\path [draw] (0,0) -- (0.2,0);
		\rainbow{(0.2,0)}{0.8};
		\path [draw] (1,0) -- (1.2,0);
		\rainbow{(1.2,0)}{0.8};
		\path [draw] (2,0) -- (3,0);
		\path [draw] (3,0) -- (3.2,0);
		\rainbow{(3.2,0)}{0.8};
		\draw[snake=snake] (0,0) -- (0,1);
		\draw[snake=snake] (0,1) -- (0,2);
		\draw[snake=snake] (0,2) -- (0,3);
		\draw[snake=snake] (1,0) -- (1,1);
		\draw[snake=snake] (1,1) -- (1,2);
		\draw[snake=snake] (1,2) -- (1,3);
		\draw[snake=snake] (2,0) -- (2,1);
		\draw[snake=snake] (2,1) -- (2,2);
		\draw[snake=snake] (2,2) -- (2,3);
		\draw[snake=snake] (3,0) -- (3,1);
		\draw[snake=snake] (3,1) -- (3,2);
		\draw[snake=snake] (3,2) -- (3,3);
		\draw[snake=snake] (4,0) -- (4,1);
		\draw[snake=snake] (4,1) -- (4,2);
		\draw[snake=snake] (4,2) -- (4,3);
	\end{scope}
}
\newcommand{\gridPath}[2]{
			\gettikzxy{#2}{\goalX}{\goalY}
			\gettikzxy{#1}{\startX}{\startY}
			\pgfmathsetmacro{\middle}{\goalX-\startX}
			\path ($(\startX,\startY)$) edge ($(\startX+\middle/2,\startY)$);
			\path ($(\startX+\middle/2,\startY)$) edge ($(\startX+\middle/2,\goalY)$);
			\path ($(\startX+\middle/2,\goalY)$) edge ($(\goalX,\goalY)$);
		}
\newcommand{\gettikzxy}[3]{%
  \tikz@scan@one@point\pgfutil@firstofone#1\relax
  \edef#2{\the\pgf@x}%
  \edef#3{\the\pgf@y}%
}
\newcommand{\tw}{\variableStyle{tw}}
\newcommand{\polyEquivRel}{\mathcal{R}}
\newcommand{\OrCross}{\textsc{OR}-cross-composition}
\renewcommand{\N}{\mathbb{N}}
\newcommand{\DMSE}{{DMSE}}
\newcommand{\MSE}{{MSE}}
\newcommand{\appref}[1]{{\hyperref[proof:#1]{\appsymb}}}
\newcommand{\appendixproof}[3]{%
  \iflong{}{#3}\else\gappto{\appendixProofText}{\subsection{Proof of #1~\ref{#2}}\label{proof:#2}#3}\fi{}
}
\newcommand{\appendixsection}[1]{%
  \iflong{}{}\else\gappto{\appendixProofText}{\section{Additional Material for Section~#1}}\fi{}
}
\newcommand{\raproof}{($\Rightarrow$)}
\newcommand{\laproof}{($\Leftarrow$)}
\newcommand{\sharedEdges}[2]%
{\ifthenelse{\equal{#2}{up}}
    {k_{#1}^{\uparrow}}
    {\ifthenelse{\equal{#2}{down}}
	    {k_{#1}^{\downarrow}}
	    {\ifthenelse{\equal{#2}{right}}
		    {k_{#1}^{\rightarrow}}
		    {\ifthenelse{\equal{#2}{left}}
			    {k_{#1}^{\leftarrow}}
			    {\textcolor{red}{which $k$?}}}}}%
}
\def\dotminus{\mathbin{\ooalign{\hss\raise1ex\hbox{.}\hss\cr
  \mathsurround=0pt$-$}}}
\title{The Minimum Shared Edges Problem on Grid-like Graphs}
\author[1]{Till~Fluschnik\thanks{Supported by the DFG, project DAMM (NI~369/13-2).}} 
\author[1]{Meike~Hatzel}
\author[1]{Steffen~H\"artlein}
\author[1]{Hendrik~Molter\thanks{Partially supported by the DFG, project DAPA (NI~369/12).}}
\author[1]{Henning~Seidler}
\affil[1]{\small{Institut f\"ur Softwaretechnik und Theoretische Informatik, TU~Berlin, Germany, \texttt{\{till.fluschnik,meike.hatzel,hendrik.molter\}@tu-berlin.de}\\ \texttt{\{haertlein,henning.seidler\}@campus.tu-berlin.de}}}
\date{\vspace{-5ex}}
\begin{document}

\mathligsoff %

\maketitle
\begin{abstract}
	\noindent We study the $\NP$-hard \emph{Minimum Shared Edges} (MSE) problem on graphs:
	decide whether it is possible to route~$p$~paths from a start vertex to a target vertex in a given graph while using at most~$k$~edges more than once.
	We show that~MSE can be decided on bounded (i.e.\ finite) grids in linear time when both dimensions are either small or large compared to the number~$p$ of paths.
	On the contrary, we show that~MSE remains $\NP$-hard on subgraphs of bounded grids.
	
	Finally, we study \MSE{} from a parametrised complexity point of view. 
	It is known that \MSE{} is fixed-parameter tractable with respect to the number~$p$ of paths. 
	We show that, under standard complexity-theoretical assumptions, the problem parametrised by the combined parameter~$k$,~$p$, maximum degree, diameter, and treewidth does not admit a polynomial-size problem kernel, even when restricted to planar graphs.
\end{abstract}

\section{Introduction}

Routing in street-like networks is a frequent task.
Graphs modelling street networks are often (almost) planar, that is, they can be drawn in the plane with (almost) no edge crossings.
As a special case, a graph modelling the street network in Manhattan is similar to a grid graph.
We study the following problem, originally introduced by Omran et al.~\cite{omran2013MSE}, from a computational (parametrised) complexity perspective on planar and grid-like graphs:

\defDecprob{\MSElong{} (\MSE{})} 
{An undirected graph $G=(V,E)$, two distinct vertices $s,t\in V$, and two integers~$k, p \in \N$.}
{Are there $p$ paths from $s$ to $t$ in~$G$ such that at most~$k$ edges appear in more than one of the $p$ paths?}
Note that Omran et al.~\cite{omran2013MSE} originally defined the problem on directed graphs (we refer to this as \DMSElong{} or \DMSE{}). 
While Omran et al. motivate \MSE{} by applications in security management,
the problem can further appear in the following scenario. 
A network company wants to upgrade their network since it still uses old copper cables.
To improve the throughput, some of these cables shall be replaced by modern optical fibre cables.
The network routes information from a source location to a target location and the company wants to achieve a certain minimal throughput.
Since digging up the conduits for the cables is much more expensive than the actual cables, we can neglect the cost of the cables and upgrade them to arbitrary bandwidth, because once open, we can lay as many cables as necessary into a conduit.
The company wants to find the minimum number of conduits that have to be dug up in order to achieve the desired bandwidth.

\paragraph{Related Work.}
Omran et al.~\cite{omran2013MSE} showed that \DMSE{} is $\NP$-complete on directed acyclic graphs. 
The problems \MSE{} and \DMSE{} were both shown to be $\NP$-complete even if the input graph is planar~\cite{fluschnik2016planar}.
Moreover, \MSE{} is solvable in linear time on unbounded (i.e.\ infinite) grid graphs~\cite{MT_Fluschnik_MSE_2015}.  
\DMSE{} is $\Floor{k/2}$-approximable~\cite{AssadiENYZ14}, but there is no
polynomial-time approximation of factor %
$2^{(\log(n))^{1-\epsilon}}$ for any $\epsilon > 0$ unless 
$\NP \subseteq \DTIME(n^{\operatorname{polylog}(n)})$~\cite{omran2013MSE}.

Analysing its parametrised complexity,
Fluschnik et al.~\cite{fluschnik2015para_MSE} showed that \MSE{} is fixed-parameter tractable when parametrised 
by the number $p$ of paths but does not admit a polynomial-size problem kernel unless 
$\NP \subseteq \coNPPoly$,  \MSE{} is $\WClass{1}$-hard when parametrised by $\tw+k$, where~$\tw$ denotes the treewidth of the input graph,
and $\WClass{2}$-hard when parametrised by the number~$k$ of shared edges.
Furthermore, \MSE{} is solvable in polynomial time on graphs of bounded treewidth~\cite{AokiHHIKZ16,ye2013boundedTW}. %

\paragraph{Our Contribution.}
We give both positive and negative results for \MSE{} on grid-like graphs.
On the positive side, we show that if the dimensions of the grid are smaller than the number~$p$ of paths, then \MSE{} is trivially decidable, and if the dimensions of the grid are at least the number~$p$ of paths, then we provide an arithmetic criterion to decide~\MSE{} in linear-time (Section~\ref{sec:bounded_grids}).
On the negative side, we prove that the situation changes when subgraphs of bounded grids (which we refer to as \emph{holey grids}) are considered, that is, we prove that \MSE{} on subgraphs of bounded grids is $\NP$-hard (Section~\ref{sec:holey}).
Similarly, we prove that \DMSE{} is $\NP$-hard for acyclic subgraphs of directed bounded grids (Section~\ref{sec:manhattan}).
Our $\NP$-hardness results improve upon the known hardness results~\cite{fluschnik2016planar} as the graphs we consider are more restricted.
Moreover, we show that \MSE{} parametrised by~$k+p+\Delta+ \diamG{} +\tw$, where~$\Delta$ and~$\diamG{}$ denote the maximum degree and diameter, respectively, does not admit a polynomial-size problem kernel, unless $\NP \subseteq \coNPPoly$, even on planar graphs (Section~\ref{sec:polyKernel}), improving an existing kernelization lower bound~\cite{fluschnik2015para_MSE}.

\section{Preliminaries}

We use basic notation from graph theory and parametrised complexity.
We denote by~$\N$ the natural numbers containing zero.

\paragraph{Graph Theory.}

Unless stated otherwise, we assume that all graphs are finite, undirected, simple and without self-loops.
We refer with $V(G)$ and $E(G)$ to the vertex set and edge set, respectively, of a graph $G$. An edge set $P \subseteq E$ is called a \emph{path} if we have $P = \Set{\Set{v_{i-1},v_i} \mid 0 < i\leq n}$ for some pairwise distinct vertices $v_0,\ldots,v_n$.
In this case we say $P$ is a $v_0$-$v_n$-path of length $n$.
The distance~$\distG{u}{v}$ between two vertices $u, v\in V(G)$ is defined as the length of a shortest $u$-$v$-path (we set~$\distG{u}{v}=\infty$ if there is no $u$-$v$~path in~$G$).

\paragraph{Grids.}
For $n,m \in \N$, let $\grid{n}{m}$ be the (bounded) $n \times m$-grid, that is, the 
undirected graph $(V,E)$ with the set of vertices $V \coloneqq \Set{(x,y) \in \N \times \N \mid x < n, y < m}$ 
and the set of edges ${E\coloneqq \{\Set{(v,w),(x,y)} \mid |v-x|+|w-y|=1\}}$.
The \emph{coordinates} of a vertex are denoted by $v\coloneqq(v_x,v_y)$.
We call the vertices of degree less than four the \emph{rim} of the grid.
\iflong{}We refer to the vertex set $\Set{(x,m-1) \mid 0 \leq x < n}$ as the 
\emph{upper} rim and analogously we use \emph{lower}, \emph{left} and \emph{right} rim.\fi{}
For a given vertex~$v \in V$ we define~$\rimDist{x}{v} \coloneqq v_x$ and 
$\rimDist{y}{v} \coloneqq v_y$, $\rimDistDual{x}{v} \coloneqq n-1 - v_x$, and 
$\rimDistDual{y}{v} \coloneqq m-1 - v_y$. 
We also use~$\rimDist{}{v} \coloneqq \rimDist{x}{v} + \rimDist{y}{v}$ and~
$\rimDistDual{}{v} \coloneqq \rimDistDual{x}{v} + \rimDistDual{y}{v}$.
\paragraph{Parametrised Complexity.}
A pair $Q \mathop{=} (P,\kappa)$ with $P \mathop{\subseteq} \Sigma^*$ and $\kappa\mathop{:} \Sigma^* \mathop{\to} \N$ is called a \emph{parametrised problem}.
A parametrised problem~$Q \mathop{=} (P,\kappa)$ admits a \emph{problem kernel} (or is \emph{kernelisable}) if there is a polynomial-time algorithm transforming any instance~$\Instance{I}$ of~$Q$ into an instance~$\Instance{I}'$ such that 
\begin{inparaenum}[(i)]
		\item $\Instance{I} \in Q \Leftrightarrow \Instance{I}' \in Q $, and
		\item the size of $\Instance{I}'$ (the \emph{kernel}) is bounded by a computable function $f(\kappa(\Instance{I}))$.
\end{inparaenum}
If $f$ is a polynomial, then the problem is said to admit a polynomial (problem) kernel.
A parametrised problem is \emph{fixed-parameter tractable} (or in FPT) if each instance $(x,\kappa)$ can be decided in~$f(\kappa(x))\cdot |x|^{O(1)}$~time, where $f$ is a computable function.
A (decidable) parametrised problem is in FPT if and only if it is kernelisable.
A parametrised problem that is~$\variableStyle{W}$-hard is presumably not in FPT. 

\paragraph{Further Notation.}
Let $\Instance{I} = (G,s,t,p,k)$ be an instance of \MSE.
We say $\Solution$ is a \emph{solution} for $\Instance{I}$ if 
\begin{inparaenum}[(i)]
	\item $\Solution$ is a multiset of $p$ $s$-$t$-paths~$\{P_1,\ldots,P_p\}$, and
	\item $|\{e\in E \mid  \exists 1\leq i<j\leq p \quantify e \in P_i\cap P_j\}| \leq k$.
\end{inparaenum}
We say that $\Solution$ is a \emph{trivial solution} if $P_i=P_j$ for all $i,j\in[p]$. 
An edge is called \emph{shared} if it occurs in at least two paths of~$\Solution$.

\section{On Bounded and Holey Grids}
\label{sec:grids}
\appendixsection{\ref{sec:grids}}

\noindent The class of grid graphs appeared frequently in the literature:
There is work on grid graphs and related graphs with respect to finding paths~\cite{Kanchanasut94,KanteMMN15}, routing~\cite{BhatiaLMN92}, or structural properties~\cite{AlankoCIOP11,Jelinek10}.
In this section we study the complexity of \MSE{} on bounded grids and their subgraphs.
We show that \MSE{} is solvable in linear time on bounded grids when both dimensions are either small or large compared to the number~$p$ of paths (Section~\ref{sec:bounded_grids}) and becomes $\NP$-hard for subgraphs of bounded grids (Sections~\ref{sec:holey} and~\ref{sec:manhattan}).
We remark that \MSE{} is solvable in linear-time on the class of unbounded grids~\cite{MT_Fluschnik_MSE_2015}.

\subsection{Bounded Grids}
\label{sec:bounded_grids}

We fix some instance $\Instance{I} \coloneqq (G = G_{n\times m},s,t,p,k)$ for the remainder of the section.
Since the problem is invariant under symmetry and swapping~$s$ and~$t$, we may assume~$s$ lies left and below of~$t$ and $\rimDist{x}{s}\leq\rimDist{y}{s}$.
To show optimality of the constructions we regard edge cuts of size less than $p$.
Assume $\Instance{I}$ has a solution~$\Solution$.
We know~\cite{MT_Fluschnik_MSE_2015} that after contraction of the shared edges, the graph must allow an $s$-$t$-flow of value at least~$p$.
Therefore, every cut smaller than $p$ has to be eliminated by a contraction, that is, it must contain a shared edge.

We distinguish the following different cases depending on the dimensions of the grid in relation to the number~$p$ of paths: \emph{$p$-small} grid ($p> \max\{n,m\}$), \emph{$p$-large} grid ($p\leq\min\{n,m\}$), and \emph{$p$-narrow} grids (neither $p$-small nor $p$-large).
We leave open whether \MSE{} is solvable in polynomial-time on $p$-narrow grids.
However, ongoing work indicates that the question can be answered positively.
\paragraph{On \boldmath$p$-small grids.}
If $p> \max\{n,m\}$, then every set of horizontal edges with endpoints having the same coordinates in the grid forms an $s$-$t$-cut of size smaller than~$p$ (analogously for every set of horizontal edges).
Hence, intuitively, any set of $p$ $s$-$t$-paths share an edge for each horizontal or vertical level they cross.
Indeed, we prove that every instance on $p$-small grids is a yes instance if and only if it admits the trivial solution.
\begin{lemma}%
	\label{lem:small_grid}
	If $m<p$ and $n < p$, then we have a solution if and only if $\distG{s}{t}\leq k$.
\end{lemma}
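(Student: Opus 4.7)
The equivalence splits cleanly into an easy direction and a counting argument.

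For the direction \laproof, assume $\distG{s}{t}\leq k$. Let $P$ be any shortest $s$-$t$-path in~$G$; it has exactly $\distG{s}{t}\leq k$ edges. Take the trivial solution $\Solution$ in which all $p$ paths equal~$P$. Then every edge of~$P$ is shared and no other edge is, so the number of shared edges is $|P|\leq k$, and $\Solution$ is a valid solution.

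For the direction \raproof, I would use the cut idea already foreshadowed in the paragraph preceding the lemma. Without loss of generality suppose $s_x\leq t_x$ and $s_y\leq t_y$. For each index $y\in\{s_y,\dots,t_y-1\}$ the set of horizontal edges
\[
R_y \coloneqq \Set{\Set{(x,y),(x,y+1)} \mid 0\leq x < n}
\]
is an $s$-$t$-cut of size $n<p$, and for each $x\in\{s_x,\dots,t_x-1\}$ the set of vertical edges $C_x\coloneqq\Set{\Set{(x,y),(x+1,y)} \mid 0\leq y < m}$ is an $s$-$t$-cut of size $m<p$. By the flow fact recalled at the start of Section~\ref{sec:bounded_grids}, after contracting the shared edges the graph must admit an $s$-$t$-flow of value~$p$, so every $s$-$t$-cut of size strictly less than $p$ must contain at least one shared edge.

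The key observation is that the cuts $R_{s_y},\dots,R_{t_y-1},C_{s_x},\dots,C_{t_x-1}$ are pairwise edge-disjoint (row-cuts use only horizontal grid edges at distinct heights, column-cuts only vertical grid edges at distinct abscissae). Hence each of these $(t_y-s_y)+(t_x-s_x)=\distG{s}{t}$ cuts consumes a distinct shared edge, forcing the number of shared edges to be at least $\distG{s}{t}$. Since a solution has at most $k$ shared edges, $\distG{s}{t}\leq k$.

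The only real subtlety is bookkeeping: making sure both dimensions being $p$-small is actually used (it is what guarantees $n<p$ \emph{and} $m<p$, so both row and column cuts qualify), and making the edge-disjointness of the chosen cuts explicit so that shared edges cannot be double-counted. Everything else is routine.
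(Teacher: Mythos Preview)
Your proof is correct and essentially identical to the paper's: both use the trivial solution for the \laproof{} direction and, for \raproof{}, the same family of pairwise edge-disjoint row- and column-cuts of size~$<p$ between $s$ and $t$, counting $\distG{s}{t}$ of them. One cosmetic slip worth fixing: the edges in your $R_y$ are \emph{vertical} grid edges and those in $C_x$ are \emph{horizontal}, not the other way around, but this does not affect the disjointness argument or the conclusion.
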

\appendixproof{Lemma}{lem:small_grid}
{
  \begin{proof}
	  Each row and each column between $s$ and $t$ is a cut of size smaller than $p$.
	  Formally, these are rows $\variableStyle{row}_x \coloneqq \{\{(x,y),(x+1,y)\}\mid 0\leq y < m\}$ for $s_x \leq x <t_x$ and columns $\variableStyle{col}_y \coloneqq \{\{(x,y),(x,y+1)\}\mid 0\leq x < n\}$ for $s_y \leq y <t_y$.
	  All these cuts are disjoint, since neither two rows nor two columns nor a row and a column ever intersect with respect to edges.
	  Their number is $(t_x-s_x)+(t_y-s_y)=\distG{s}{t}$.
	  So if we have a solution, then $k$ has to be at least $\distG{s}{t}$. 
	  Conversely, if $\distG{s}{t}\leq k$, we have a trivial solution by definition.
  \end{proof}
}

\paragraph{On \boldmath$p$-large grids.}

Compared to the situation on $p$-small grids, $p$-large grids allow for non-trivial solutions.
Nevertheless, we prove that the existence of such non-trivial solutions is expressed by arithmetic conditions which can be checked in linear time.
These arithmetic conditions basically relate $p$, $k$, and the positions of $s$ and $t$ relative to the rim of the grid.
If $s$ lies sufficiently far away from the corner formed by the left and lower rim, then only every second path in our construction introduces a new shared edge at this part.
However, if~$s$ lies close to the corner (or if~$p$ is large enough), there is a critical number of paths after which every additional path introduces at least one new shared edge.
The same happens at the side of $t$.
Thus we obtain the following cases.

\begin{lemma}%
	\label{lem:constr_large_k}
	Let $p\leq m$ and $p\leq n$.
	Then there is a non-trivial solution if and only if either
	\begin{compactitem}
		\item $p \leq 2 (\rimDist{}{s} + 2) - \deg{s}$ 
			and $k \geq \Ceil{\inlinefrac{1}{2}(p-\deg{s})} + \Ceil{\inlinefrac{1}{2}(p-\deg{t})}$, or
		\item $2 (\rimDist{}{s} + 2) - \deg{s} < p \leq 2 (\rimDistDual{}{t} + 2) - \deg{t}$\\
			and $k \geq p - (\rimDist{}{s}+2) +\Ceil{\frac{1}{2}(p-\deg{t})}$, or
		\item $p > 2 (\rimDistDual{}{t} + 2) - \deg{t}$ 
			and $k \geq 2p -(\rimDist{}{s} + \rimDistDual{}{t} + 4)$.
	\end{compactitem}
\end{lemma}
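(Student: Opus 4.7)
The plan is to prove the ``if'' and ``only if'' directions for each of the three cases. The ``if'' direction proceeds by an explicit routing construction; the ``only if'' direction uses the characterisation cited just before the lemma, namely that any solution $\Solution$ must, in every $s$-$t$-cut of size strictly less than~$p$, contain at least one shared edge.

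For the construction, I would route $p$ paths that share edges only in two small neighbourhoods, one around~$s$ and one around~$t$. Starting at~$s$, fan out one path along each of the $\deg{s}$ incident edges, steering each toward a distinct row or column. Each additional path is introduced by designating one edge (incident to~$s$, or, when those are used up, to a neighbour of~$s$) as a shared edge; two paths then diverge at the far endpoint of the shared edge. In Case~1, every shared edge can be placed so that its far endpoint is still interior to the grid, so two new paths diverge per shared edge, accounting for the term $\lceil(p-\deg{s})/2\rceil$. Once the construction reaches the rim of the grid (the regime of Cases~2 and~3), the local degree drops and the fan-out must wrap around the bottom-left corner along a single corridor, so each further path now requires its own dedicated shared edge, accounting for $p-(\rimDist{}{s}+2)$ shared edges near~$s$. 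A mirrored construction works near~$t$, producing the contributions $\lceil(p-\deg{t})/2\rceil$ or $p-(\rimDistDual{}{t}+2)$. Because $p\le\min(n,m)$, the bulk of the grid between the two fan-out regions is wide enough to carry the~$p$ path segments without any further sharing, and the paths are pairwise distinct, so the solution is non-trivial.

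For the lower bound, I would exhibit a family of pairwise edge-disjoint $s$-$t$-cuts of size strictly less than~$p$. Near~$s$, the family consists of nested layer cuts $C_i$ that isolate $s$ together with its~$i$ nearest vertices in the direction of the bottom-left corner; while both sides of~$C_i$ remain interior, $|C_i|$ grows by~$2$ as $i$ increases by~$1$, and once $C_i$ touches the rim, the growth rate drops to~$+1$. Counting the indices~$i$ for which $|C_i|<p$ yields exactly $\lceil(p-\deg{s})/2\rceil$ in Case~1 and $p-(\rimDist{}{s}+2)$ in Cases~2 and~3. By the cut characterisation, each such cut contributes at least one shared edge. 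A symmetric family around~$t$ contributes the analogous term; since $p\le\min(n,m)$ the two families can be chosen edge-disjoint, so the contributions add up to the claimed lower bound on~$k$.

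The main obstacle will be the bookkeeping at the transition layer where the nested cuts first touch the rim: this is exactly where the growth rate of~$|C_i|$ changes from~$+2$ to~$+1$ and where the case distinction of the lemma is triggered; matching the count of small cuts on the lower-bound side with the count of shared edges introduced on the construction side there requires care. A secondary subtlety is to handle the rim- and corner-cases $\deg{s}\in\{2,3\}$ (and analogously for $t$) uniformly, which the additive~$+2$ in $\rimDist{}{s}+2$ is designed to absorb; the degenerate sub-cases where $p$ exactly equals $\deg{s}$ or the transition threshold must also be checked against the ceiling in $\lceil(p-\deg{s})/2\rceil$.
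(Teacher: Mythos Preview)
Your construction sketch aligns with the paper's (fan out from~$s$ with path fragments going up and to the right, symmetrically at~$t$, then join the two fans through the wide middle of the grid), and your overall decomposition into an $s$-contribution $k_s$ and a $t$-contribution $k_t$ is exactly what the paper does. So the feasibility direction is essentially fine.

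The gap is in your lower bound for Case~1. You claim a family of \emph{pairwise edge-disjoint} $s$-$t$-cuts $C_i$ whose sizes grow by~$+2$ per step, starting at $\deg{s}$; counting those with $|C_i|<p$ would indeed give $\lceil(p-\deg{s})/2\rceil$. But these two requirements are incompatible in a grid: if the $C_i$ are boundaries of genuinely nested, vertex-disjoint regions (so that the cuts are edge-disjoint), each step enlarges the region by a full ``layer'' and the cut size grows by much more than~$2$---for $\ell_1$-balls around an interior $s$ the growth is~$+8$, for growing squares it is~$+4$---so you obtain far too few small cuts. Conversely, if you add vertices one at a time (which is what produces a~$+2$ increment: one edge into the region is lost, three edges out are gained), then $C_i$ and $C_{i+1}$ share all but one edge and are certainly not disjoint, so the ``one shared edge per cut'' count collapses. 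The paper sidesteps this by \emph{not} using cuts in Case~1: it argues directly that contracting a shared edge increases the degree of~$s$ by at most~$2$, so after $k_s$ contractions $\deg s\le \deg{s}+2k_s$, and since a flow of value~$p$ must leave~$s$ one gets $k_s\ge\lceil(p-\deg s)/2\rceil$. For Cases~2 and~3 the paper does use cuts, but not nested ones: it takes $\texttt{L}$-shaped cuts $\text{cut}_i$ bounding rectangles $[0,i]\times[0,p-3-i]$ anchored at the lower-left corner; each has size exactly $p-1$, and as the upper-right corner slides along a diagonal the cuts use distinct rows and columns and are therefore pairwise edge-disjoint, yielding $p-2-\rimDist{}{s}$ of them. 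This is close in spirit to what you describe after the rim is reached, but the constant-size sliding-rectangle picture is cleaner than a nested family with a growth-rate transition.
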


{

To simplify the notation for our construction, we introduce the following arrow notation.
For $(x,y) \in V$ we define $(x,y) \rightarrow (x+ \ell,y) \coloneqq \{\{(x+i,y),(x+i+1,y)\} \in E \mid 0 \leq i < \ell\}$.
Analogously we define $\uparrow, \downarrow$ and $\leftarrow$.
We also use the concatenation of these expressions such that e.g.~$u \rightarrow v \uparrow w \coloneqq (u \rightarrow v) \cup (v \uparrow w)$. %
\begin{figure}[t]
  \centering
  \begin{subfigure}[b]{.475\textwidth}
    \centering
    \begin{tikzpicture}[scale=0.35]
	    \path [use as bounding box] (-0.5,-0.5) rectangle (16.5,12.7);
	    \draw [help lines] grid (10.9,9.9);
	    \foreach\i in {0,...,10} \node at (\i,11) {$\vdots$};
	    \foreach\i in {0,...,9} \node at (12,\i) {$\cdots$};

	    \node [draw=none, circle, scale=0.6, fill=none] (s) at (4,5) {};

	    \draw [decorate,decoration={brace,mirror,raise=2pt},black, thick] ($(s)+(0,6)$) -- ($(s)+(-4,6)$)
		    node [midway,yshift=7pt,anchor=south,inner sep=2pt, outer sep=1pt, fill=white]
		    {$\rimDist{x}{s}+1$};		
		    
	    \draw [decorate,decoration={brace,mirror,raise=2pt},black, thick] ($(s)+(4,6)$) -- ($(s)+(1,6)$)
		    node [midway,yshift=7pt,anchor=south,inner sep=2pt, outer sep=1pt, fill=white]
		    {$\rimDist{y}{s}-1$};		
	    
	    \draw[decorate,decoration={brace,mirror,raise=2pt},black, thick] ($(s)+(8.2,1)$) -- ($(s)+(8.2,3)$)
		    node [midway,xshift=7pt,anchor=west,inner sep=2pt, outer sep=1pt, fill=white]
		    {$\rimDist{x}{s}-1$};
		    
	    \draw[decorate,decoration={brace,mirror,raise=2pt},black, thick] ($(s)+(8.2,-5)$) -- ($(s)+(8.2,0)$)
		    node [midway,xshift=7pt,anchor=west,inner sep=2pt, outer sep=1pt, fill=white]
		    {$\rimDist{y}{s}+1$};

	    \path [draw,black,very thick, ->] ($(s)$) -- ($(s)+(0,5)$);
	    \path [draw,black,very thick] ($(s)$) -- ($(s)+(-4,0)$);
	    \path [draw,black,very thick, ->] ($(s)+(-4,0)$) -- ($(s)+(-4,5)$);
	    \path [draw,black,very thick] ($(s)+(0,1)$) -- ($(s)+(-3,1)$);
	    \path [draw,black,very thick, ->] ($(s)+(-3,1)$) -- ($(s)+(-3,5)$);
	    \path [draw,black,very thick] ($(s)+(0,2)$) -- ($(s)+(-2,2)$);
	    \path [draw,black,very thick, ->] ($(s)+(-2,2)$) -- ($(s)+(-2,5)$);
	    \path [draw,black,very thick] ($(s)+(0,3.01)$) -- ($(s)+(-1,3.01)$);
	    \path [draw,black,very thick, ->] ($(s)+(-1,3)$) -- ($(s)+(-1,5)$);
	    
	    \path [draw,black,very thick, ->] ($(s)+(0,1)$) -- ($(s)+(7,1)$);
	    \path [draw,black,very thick, ->] ($(s)+(0,2)$) -- ($(s)+(7,2)$);
	    \path [draw,black,very thick, ->] ($(s)+(0,3.01)$) -- ($(s)+(7,3.01)$);
	    
	    \path [draw,black,very thick, ->] ($(s)$) -- ($(s)+(7,0)$);
	    \path [draw,black,very thick] ($(s)$) -- ($(s)+(0,-5)$);
	    \path [draw,black,very thick, ->] ($(s)+(0,-5)$) -- ($(s)+(7,-5)$);
	    \path [draw,black,very thick] ($(s)+(1,0)$) -- ($(s)+(1,-4)$);
	    \path [draw,black,very thick, ->] ($(s)+(1,-4)$) -- ($(s)+(7,-4)$);
	    \path [draw,black,very thick] ($(s)+(2,0)$) -- ($(s)+(2,-3)$);
	    \path [draw,black,very thick, ->] ($(s)+(2,-3)$) -- ($(s)+(7,-3)$);
	    \path [draw,black,very thick] ($(s)+(3,0)$) -- ($(s)+(3,-2)$);
	    \path [draw,black,very thick, ->] ($(s)+(3,-2)$) -- ($(s)+(7,-2)$);
	    \path [draw,black,very thick] ($(s)+(4,0)$) -- ($(s)+(4,-1)$);
	    \path [draw,black,very thick, ->] ($(s)+(4,-1)$) -- ($(s)+(7,-1)$);
	    
	    \path [draw,black,very thick, ->] ($(s)+(1,0)$) -- ($(s)+(1,5)$);
	    \path [draw,black,very thick, ->] ($(s)+(2,0)$) -- ($(s)+(2,5)$);
	    \path [draw,black,very thick, ->] ($(s)+(3,0)$) -- ($(s)+(3,5)$);
	    \path [draw,black,very thick, ->] ($(s)+(4,0)$) -- ($(s)+(4,5)$);

	    \path [draw, dashed, red,very thick] ($(s)+(4,0)$) -- ($(s)+(5,0)$);
	    \path [draw, dashed, red,very thick, ->] ($(s)+(5,0)$) -- ($(s)+(5,5)$);
	    \path [draw,dashed,red,very thick] ($(s)+(5,0)$) -- ($(s)+(6,0)$);
	    \path [draw,dashed,red,very thick, ->] ($(s)+(6,0)$) -- ($(s)+(6,5)$);
	    \path [draw,dashed,red,very thick, ->] ($(s)+(0,3)$) -- ($(s)+(0,4)$) -- ($(s)+(7,4)$);

	    \draw [orange, very thick] ($(s)+(0,2)$) ellipse (0.3 and 2);
	    \draw [orange, very thick] ($(s)+(3,0)$) ellipse (3 and 0.3);
	    
	    \node [draw, circle, scale=0.6, fill] at (s) {};
	    \node [scale=1] (sLabel) at ($(s)-(0.5,0.5)$) {$s$};

    \end{tikzpicture}
    \subcaption{Path construction at vertex $s$.}%
    \label{fig:path_constr}
  \end{subfigure}%
  \hfill
  \begin{subfigure}[b]{.475\textwidth}
    \centering
    \begin{tikzpicture}[scale=0.46]
	    \path [use as bounding box] (-1.5,-0.5) rectangle (10.5,10.7);
	    \draw [help lines] grid (6.9,6.9);
	    \foreach\i in {0,...,6} \node at (\i,8) {$\vdots$};
	    \foreach\i in {0,...,6} \node at (8,\i) {$\cdots$};

	    \node [draw=none, circle, scale=0.6, fill=none] (s) at (3,3) {};

	    \path [draw,black,very thick] ($(s)$) -- ($(s)+(0,2)$);
	    \path [draw, red, densely dotted, very thick, ->] ($(s)+(0,2)$) -- ($(s)+(0,3)$);

	    \path [draw,blue,dashed,very thick] ($(s) + (0,2)$) -- ($(s)+(-1,2)$);
	    \path [draw,blue,dashed,very thick, ->] ($(s) + (-1,2)$) -- ($(s)+(-1,3)$);

	    \path [draw,black,very thick] ($(s)$) -- ($(s)+(-3,0)$);
	    \path [draw,black,very thick, ->] ($(s)+(-3,0)$) -- ($(s)+(-3,3)$);

	    \path [draw,black,very thick] ($(s)$) -- ($(s)+(2,0)$);
	    \path [draw,red, densely dotted,very thick, ->] ($(s)+(2,0)$) -- ($(s)+(3,0)$);

	    \path [draw,blue,dashed,very thick] ($(s)+(2,0)$) -- ($(s)+(2,-1)$);
	    \path [draw,blue,dashed,very thick, ->] ($(s)+(2,-1)$) -- ($(s)+(3,-1)$);

	    \path [draw,black,very thick] ($(s)$) -- ($(s)+(0,-3)$);
	    \path [draw,black,very thick, ->] ($(s)+(0,-3)$) -- ($(s)+(3,-3)$);

	    \path [draw,black,very thick] ($(s)+(1,0)$) -- ($(s)+(1,1)$);
	    \path [draw,red, densely dotted, very thick, ->] ($(s)+(1,1)$) -- ($(s)+(1,3)$);

	    \path [draw,black,very thick] ($(s)+(1,0)$) -- ($(s)+(1,-2)$);
	    \path [draw,black,very thick, ->] ($(s)+(1,-2)$) -- ($(s)+(3,-2)$);

	    \path [draw,blue,dashed,very thick] ($(s) + (1,1)$) -- ($(s)+(-2,1)$);
	    \path [draw,blue,dashed,very thick, ->] ($(s)+(-2,1)$) -- ($(s)+(-2,3)$);

	    \node [draw, circle, scale=0.6, fill] at (s) {};
	    \node [scale=1] (sLabel) at ($(s)-(0.5,0.5)$) {$s$};

		    \node at ($(s)+(-0.7,6.3)$) {$U_2$};
		    \node at ($(s)+(-3.7,6.3)$) {$U_0$};
		    \node at ($(s)+(-2.2,6.3)$) {$U_1$};

		    \node at ($(s)+(6.3,-1)$) {$R_2$};
		    \node at ($(s)+(6.3,-2)$) {$R_1$};
		    \node at ($(s)+(6.3,-3)$) {$R_0$};
    \end{tikzpicture}
    \subcaption{Filling gaps by rerouting.}%
    \label{fig:path_gap_fill}
  \end{subfigure}
  \caption{Sketched aspects of the construction described in the proof of Lemma~\ref{lem:constr_large_k}. \subref{fig:path_constr} Path construction at vertex $s$; dashed: one shared edge; black: only a shared edge every two paths; orange ellipses enclose shared edges. \subref{fig:path_gap_fill} Filling gaps by rerouting in the construction; dotted: paths before reconstruction, dashed: rerouted paths for filling the gaps.}
  \label{fig:constrs}
\end{figure}

  \begin{proof}
	  From $s$ we construct path fragments (cf.~Fig.~\ref{fig:constrs}) going upwards:
	  \begin{align*}
		  U_{i} &\coloneqq s \uparrow (s_x, s_y + i) \leftarrow (i, s_y + i) \uparrow (i, m - 1 - i), & 0 & \leq i < \rimDist{x}{s},\\
		  U_{i} &\coloneqq s \rightarrow (i, s_y) \uparrow (i, m - 1 - i), & \rimDist{x}{s} & \leq i. %
	  \end{align*}
	  Next, we construct path fragments going to the right:
	  \begin{align*}
		  R_{i} &\coloneqq s \rightarrow (s_x+i,s_y) \downarrow (s_x+i,i) \rightarrow (n - 1 - i,i), & 0 &\leq i < \rimDist{y}{s},\\
		  R_{i} &\coloneqq s \uparrow (s_x, i) \rightarrow (n - 1 - i, i), & \rimDist{y}{s} & \leq i. %
	  \end{align*}
	  To obtain the solution we add the path fragments in the following order. This process is illustrated in Fig.~\ref{fig:path_constr}.
	  \begin{compactenum}[A)]
	  \item We start with $U_0$, $U_{\rimDist{x}{s}}$, $R_0$, and $R_{\rimDist{y}{s}}$ which have no shared edge.
	  This yields $\deg{s}$ paths, since some of these are identical if $s$ lies on the rim.
	  Then for $i=1,\ldots,\rimDist{y}{s}-1$ we add $R_i$ and $U_{\rimDist{x}{s}+i}$, where $R_i$ new adds a shared edge (the other common edges were already shared before). 
	  Afterwards we continue adding $U_i$ and $R_{\rimDist{y}{s}+i}$ for $i=1,\ldots,\rimDist{x}{s}-1$, where $U_i$ adds a shared edge.
	  Thus every other path fragment adds a shared edge. 
	  We stop as soon as we have constructed $p$ paths.

	  \item Continue adding $R_i$ for $i=\rimDist{}{s}+1,\ldots,\rimDist{}{t}$ and $U_j$ for $j=\rimDist{}{s}+1,\ldots,p-\rimDist{}{t}$ until we reach $p$ paths.
	  Here each single fragment adds another shared edge.
\end{compactenum}
	  For $p\leq 2\rimDist{}{s}$ we add the following modifications.
	  If $p\leq 2\rimDist{y}{s}+2$, that is, $R_{\rimDist{y}{s}}$ is the last right-going fragment, extend this fragment downwards such that the endpoints of the $R_i$ form a consecutive line.
	  If the $U_i$ leave a gap, that is, $U_j$ is not part of the construction for some $0<j<\rimDist{x}{s}$, we take the rightmost up-going fragment and route it leftwards along the first free row, and then continue as $U_j$.
	  In the end, if necessary, we extend $U_{\rimDist{x}{s}}$ leftwards, like we did with $R_{\rimDist{y}{s}}$.
	  Thus the endpoints of the up-going fragments form a consecutive line as well.
	  These steps do not introduce further shared edges.
	  See Fig.~\ref{fig:path_gap_fill} for an illustration.

		So in the end we may assume we have constructed fragments $U_0,\ldots,U_{u-1}$ and $R_0,\ldots,R_{r-1}$ for some $u,r\in\N$ with $u+r=p$.
	  At $t$ we proceed analogously, simply mirrored.
	  Therefore we have $r$ down-going fragments $D_i$ and $u$ left-going fragments $L_i$.
	  Then we obtain the solution
	  \begin{align*}
		  \Solution \coloneqq \Set{U_i\cup L_i \mid i=0,\ldots,u} \cup \Set{R_i\cup D_i \mid i=0,\ldots,r}.
	  \end{align*}

	  \noindent\emph{Feasibility.}
	  Furthermore the $R_i$ only use the lower $r$ rows of the grid whereas the $L_i$ use the upper $u$ rows.
	  Since $r+u=p\leq m$, these do not intersect, that is, we do not get further shared edges.
	  The same holds for the $U_i$ and $D_i$, since $p\leq n$.

	  Let $k_s$ and $k_t$ denote the number of shared edges used to construct the path fragments at $s$, and at $t$ respectively.
	  Thus, we have a solution if $k\geq k_s+k_t$.

	  If $p \leq 2(\rimDist{}{s} + 2) -\deg{s}$, then we only use part A. From the $\deg{s}$-th path to the $p$-th path, every other path adds a new shared edge, so $k_s = \Ceil{\inlinefrac{1}{2}(p-\deg{s})}$.
	  Furthermore, $\rimDist{}{s}\leq\rimDistDual{}{t}$, so at $t$ we also only use part A.
	  This implies $k_t = \Ceil{\inlinefrac{1}{2}(p-\deg{t})}$.
	  Hence $k \geq k_s + k_t = \Ceil{\inlinefrac{1}{2}(p-\deg{s})} + \Ceil{\inlinefrac{1}{2}(p-\deg{t})}$.

	  If $2 (\rimDist{}{s} + 2) - \deg{s} < p \leq 2 (\rimDistDual{}{t} + 2) - \deg{t}$, we still only use part A at~$t$ getting $k_t = \Ceil{\inlinefrac{1}{2}(p-\deg{t})}$.
	  But at $s$ we also use part B. 
	  Assume that~$s$ lies in the interior of the grid.
	  Then, when completely executing part A, we use $\rimDist{}{s}-2$ shared edges to construct $2\rimDist{}{s}$ paths.
	  This leaves $k_s-(\rimDist{}{s}-2)$ shared edges for part B. 
	  Each of those allows for another path.
	  So we obtain the condition $p = \rimDist{}{s}+2+k_s$.
	  If $s$ lies on the rim or in the corner, then the argument differs slightly, but the condition is the same.
	  So overall we get the condition $k \geq k_s+k_t = p - (\rimDist{}{s}+2) +\Ceil{\frac{1}{2}(p-\deg{t})}$.
	  
	  Finally, if $p > 2 (\rimDistDual{}{t} + 2) - \deg{t}$, then we use part B at both $s$ and $t$.
	  Thus we have $p = \rimDist{}{s}+2+k_s = \rimDistDual{}{t}+2+k_t$.
	  By adding these equalities we obtain $k_s+k_t = 2p -(\rimDist{}{s} + \rimDistDual{}{t} + 4)\leq k$.
	  So the solution is feasible.

	  \smallskip\noindent\emph{Optimality.}
	  We only give a lower bound for the number $k_s$ of shared edges at $s$.
	  The bound for $k_t$ follows analogously, which then gives the desired bound for $k$.

	  During part~A) of the construction, each contraction may increase the degree of $s$ by at most 2.
	  Hence $p\leq\deg{s}+2k_s$, which shows $k_s\geq\Ceil{\inlinefrac{1}{2}(p-\deg{s})}$.

	  For $p\geq 2(\rimDist{}{s}+2)$ we present a number of cuts of size $p-1$.
	  We use rectangles containing $s$, whose right upper corners move along a diagonal.
	  Formally, these are
	  $\variableStyle{cut}_i \coloneqq \{\{(i,y),(i+1,y)\} \mid y \leq p - 3 - i \} \cup \{\{(x,p-3-i),(x,p-2-i)\} \mid x \leq i\}$
	  for $i=s_x,\ldots,p-3-s_y$. 
	  Assume that~$t$ lies inside one of those rectangles.
	  Then $\variableStyle{cut}_i$ for $s_x\leq i<t_x$ and $p-3-t_y< i\leq p-3-s_y$ are $s$-$t$-cuts of size $p-1$, and these are $\distG{s}{t}$ many.
	  In this case we need $k\geq\distG{s}{t}$ which only allows the trivial solution.
	  So we may assume that~$t$ lies outside all of these rectangles.
	  Thus there are $p-2-\rimDist{}{s}$ many of these cuts and they separate $s$ and $t$.
	  Furthermore they are pairwise disjoint.
	  So we get $k_s \geq p - 2 - \rimDist{}{s}$.

	  Altogether, our construction is optimal.
  \end{proof}
}

\subsection{Holey Grids}
\label{sec:holey}

In the previous section we proved that \MSE{} is solvable in linear time on small and large (compared to the number~$p$ of paths) bounded grids.
In this section we study the complexity of \MSE{} on subgraphs of bounded grids, which we call \emph{holey grids} and show that the problem is $\NP$-hard on this graph class.
To this end we reduce from the well known \Probl{Vertex Cover} problem which is, given a graph $G$ and a natural number $k\in\N$, to decide whether there exists $U \mathop{\subseteq} V$ with $\Abs{U} \mathop{\leq} k$ such that $\forall e \mathop{\in} E\quantify e \mathop{\cap} U\neq \emptyset$. 
More precisely, we use that \Probl{Vertex Cover} remains $\NP$-complete on graphs with maximum degree three~\cite{DBLP:books/fm/GareyJ79}.
Note that our reduction adapts the idea of a reduction used in previous work~\cite{fluschnik2016planar}.

\begin{figure}[t]
	\centering
	\begin{subfigure}[b]{0.15\textwidth}%
	  \centering
	 \begin{tikzpicture}[scale=1.2]
			\node [circle, scale=0.4,fill] (v4) at (0,1) {};
			\node (sLabel) at ($(v4)+(-0.15,+0.15)$) {$v_4$};
			\node [circle, scale=0.4,fill] (v3) at (1,1) {};
			\node (sLabel) at ($(v3)+(+0.15,+0.15)$) {$v_3$};
			\node [circle, scale=0.4,fill] (v2) at (1,0) {};
			\node (sLabel) at ($(v2)+(+0.15,-0.15)$) {$v_2$};
			\node [circle, scale=0.4,fill] (v1) at (0,0) {};
			\node (sLabel) at ($(v1)+(-0.15,-0.15)$) {$v_1$};
			\path [draw] (v1) -- node[auto,swap]{$e_1$} (v2);
			\path [draw] (v2) -- node[auto,swap]{$e_2$} (v3);
			\path [draw] (v3) -- node[auto,swap]{$e_3$} (v4);
			\path [draw] (v1) -- node[auto,swap,label={[label distance=-0.5cm]315:$e_4$}]{} (v3);
		\end{tikzpicture}
	  \subcaption{Example graph.}%
	  \label{fig:AKT-std-graph}
	\end{subfigure}
	\hfill
	\begin{subfigure}[b]{0.37\textwidth}%
	  \centering
	 \begin{tikzpicture}[scale=0.31]
		\draw [color=gray!50!white,ultra thin] (5,4) grid (17,10);
		\node [fill, circle, scale=0.45] (A) at (6,6) {};
		\node [fill, circle, scale=0.45] (B) at ($(A)+(10,0)$) {};
		\path [draw, thick] (A) -- ($(A)+(3,0)$);
		\path [draw, thick] (B) -- ($(B)+(-3,0)$);
		\foreach\i in {1,2,3}
		{
			\node [fill, circle, scale=0.45] at ($(A)+(\i,0)$) {};
			\node [fill, circle, scale=0.45] at ($(B)-(\i,0)$) {};
		}
		\foreach\i in {1,...,5} \node [fill, circle, scale=0.45] at ($(A)+(\i+2,1)$) {};
		\foreach\i in {1,...,4} \path [draw, thick] ($(A)+(\i+2,1)$) -- ($(A)+(\i+3,1)$);
		\foreach\i in {3,7} \path [draw, thick] ($(A)+(\i,1)$) -- ($(A)+(\i,0)$);
		\foreach\i in {1,...,7} \node [fill, circle, scale=0.45] at ($(A)+(\i+1,2)$) {};
		\foreach\i in {1,...,6} \path [draw, thick] ($(A)+(\i+1,2)$) -- ($(A)+(\i+2,2)$);
		\foreach\i in {2,8}
		{
			\path [draw, thick] ($(A)+(\i,2)$) -- ($(A)+(\i,0)$);
			\node [fill, circle, scale=0.45] at ($(A)+(\i,1)$) {};
		}
		\foreach\i in {1,...,9} \node [fill, circle, scale=0.45] at ($(A)+(\i,3)$) {};
		\foreach\i in {1,...,8} \path [draw, thick] ($(A)+(\i,3)$) -- ($(A)+(\i+1,3)$);
		\foreach\i in {1,9}
		{
			\path [draw, thick] ($(A)+(\i,3)$) -- ($(A)+(\i,0)$);
			\node [fill, circle, scale=0.45] at ($(A)+(\i,1)$) {};
			\node [fill, circle, scale=0.45] at ($(A)+(\i,2)$) {};
		}
		\draw[decorate,decoration={brace,mirror,raise=5pt},black, thick] (A) -- ($(A)+(2.9,0)$)
			node [midway,yshift=-7pt,anchor=north,inner sep=2pt, outer sep=1pt, fill=white,opacity=0.3]
			{$M$};
			\draw[decorate,decoration={brace,mirror,raise=5pt},black, thick] (A) -- ($(A)+(2.9,0)$)
			node [midway,yshift=-7pt,anchor=north,inner sep=2pt, outer sep=1pt]
			{$M$};
		\draw[decorate,decoration={brace,raise=5pt},black, thick] (B) -- ($(B)-(2.9,0)$)
			node [midway,yshift=-7pt,anchor=north,inner sep=2pt, outer sep=1pt, fill=white,opacity=0.3]
			{$M$};
			\draw[decorate,decoration={brace,raise=5pt},black, thick] (B) -- ($(B)-(2.9,0)$)
			node [midway,yshift=-7pt,anchor=north,inner sep=2pt, outer sep=1pt]
			{$M$};
		\draw[decorate,decoration={brace,mirror,raise=5pt},black, thick] ($(A)+(3,0)$) -- ($(A)+(7,0)$)
			node [midway,yshift=-7pt,anchor=north,inner sep=2pt, outer sep=1pt, fill=white,opacity=0.6]
			{$\geq k'-1$};
			\draw[decorate,decoration={brace,mirror,raise=5pt},black, thick] ($(A)+(3,0)$) -- ($(A)+(7,0)$)
			node [midway,yshift=-7pt,anchor=north,inner sep=2pt, outer sep=1pt]
			{$\geq k'-1$};
	\end{tikzpicture}
	\subcaption{Rainbow gadget.}%
	\label{fig:rainbow}
	\end{subfigure}
	\hfill
	\begin{subfigure}[b]{0.45\textwidth}
	\begin{tikzpicture}[scale=0.26]
		\draw [color=gray!50!white,ultra thin] (0,0) grid (17,9);

		\node [fill, circle, scale=0.45,label={$v'_{i,j}$}] (x1) at (1,9) {};
		\node [fill, circle, scale=0.45,label={270:$v'_{i+1,j}$}] (y1) at ($(x1)-(0,9)$) {};
		\path [draw, very thick] (x1) -- ++(0,-4) -- ++(6,0) -- ++(0,-1) -- ++(-6,0) -- ++(0,-4);
		
		\node [fill, circle, scale=0.45,label={$v'_{i,j+1}$}] (x2) at (5,9) {};
		\node [fill, circle, scale=0.45,label={270:$v'_{i+1,j+1}$}] (y2) at ($(x2)-(0,9)$) {};
		\path [draw, very thick] (x2) -- ++(0,-3) -- ++(6,0) -- ++(0,-3) -- ++(-6,0) -- ++(0,-3);
		
		\node [fill, circle, scale=0.45,label={$v'_{i,j+2}$}] (x3) at (10,9) {};
		\node [fill, circle, scale=0.45,label={270:$v'_{i+1,j+2}$}] (y3) at ($(x3)-(0,9)$) {};
		\path [draw, very thick] (x3) -- ++(0,-2) -- ++(6,0) -- ++(0,-5) -- ++(-6,0) -- ++(0,-2);
	\end{tikzpicture}
	\subcaption{Snake-chains.}%
	\label{fig:snakes}
	\end{subfigure}
	\begin{subfigure}[b]{1\textwidth}\vspace{10pt}
	  \centering
	  \begin{tikzpicture}[scale=0.35]
		\colorlet{arrowGreen}{green!50!black}
		\node [circle, scale=0.3,fill] (s) at (0,0) {};
		\node (sLabel) at ($(s)+(-0.5,-0.5)$) {$s$};

		\foreach\j in {-1,1}
		{
			\foreach\k in {-1,1}
			{
				\path [draw] ($(s)+(1,0)$) -- ($(s)+(1,0+\j*2)$);
				\path [draw] ($(s)+(0,0)$) -- ($(s)+(1,0)$);
				\path [draw] ($(s)+(2,0+\j*2)$) -- ($(s)+(2,\j*2+\k)$);
				\path [draw] ($(s)+(1,0+\j*2)$) -- ($(s)+(2,\j*2)$);
				\path [draw] ($(s)+(2,0+\j*2+\k)$) -- ($(s)+(3,\j*2+\k)$);
			}
		}

		\path [draw] ($(s)+(3,3)$) -- ($(s)+(3,5)$);
		\path [draw, <-, red, shorten <=1pt] ($(s)+(3.2,3)$) -- ($(s)+(3.2,4.8)$);
		\path [draw, ->, red, shorten >=1pt] ($(s)+(3.2,4.8)$) -- ($(s)+(5,4.8)$);
		\node [inner sep=1pt,label=right:\textcolor{red}{$a$}] (X) at ($(s)+(3.2,4)$) {};
		\path [draw] ($(s)+(3,5)$) -- ($(s)+(5,5)$);

		\def\gridXScale{2.7}
		\def\gridYScale{10/3}
		\def\arrowBegin{5}
		\def\gridWidth{4*\gridXScale}
		\rainbow{($(s)+(\arrowBegin,5)$)}{3}
		\rainbow{($(s)+(\arrowBegin,-5)$)}{3}
		\path ($(s)+(\arrowBegin,-5)$) edge ($(s)+(\arrowBegin-2,-5)$);
		\path ($(s)+(\arrowBegin-2,-5)$) edge ($(s)+(\arrowBegin-2,-3)$);

		\node [circle, scale=0.3,fill] (t) at ($(s)+(2*\arrowBegin+2+\gridWidth+4,0)$) {};
		\node (tLabel) at ($(t)+(0.5,0.5)$) {$t$};		

		\node (gridStartingVertex) at ($(s)+(\arrowBegin+3,-5)$) {};
		\path (s) edge[line width=3pt,green] ($(s)+(0,8)$);
		\path ($(s)+(0,8)$) edge[line width=3pt,green] ($(gridStartingVertex)+(0,3*\gridYScale+3)$);
		\path ($(gridStartingVertex)+(0,3*\gridYScale+3)$) edge[line width=3pt,green] ($(gridStartingVertex)+(0,3*\gridYScale)$);
		\path (t) edge[line width=3pt,green] ($(t)-(0,7)$);
		\path ($(t)-(0,7)$) edge[line width=3pt,green] ($(gridStartingVertex)+(4*\gridXScale,-2)$);
		\path ($(gridStartingVertex)+(4*\gridXScale,-2)$) edge[line width=3pt,green] ($(gridStartingVertex)+(4*\gridXScale,0)$);
		\VCgrid{(gridStartingVertex)}{\gridXScale}{\gridYScale}{4}

		\rainbow{($(gridStartingVertex)+(-2,\gridYScale)$)}{2}	
		\rainbow{($(gridStartingVertex)+(-2,2*\gridYScale)$)}{2}
		
		\gridPath{($(s)+(3,1)$)}{($(gridStartingVertex)+(-2,2*\gridYScale)$)}		
		\gridPath{($(s)+(3,-1)$)}{($(gridStartingVertex)+(-2,\gridYScale)$)}

		\foreach\j in {-1,1}
		{
			\foreach\k in {-1,1}
			{
				\path [draw] ($(t)-(1,0)$) -- ($(t)-(1,0+\j*2)$);
				\path [draw] ($(t)-(0,0)$) -- ($(t)-(1,0)$);
				\path [draw] ($(t)-(2,0+\j*2)$) -- ($(t)-(2,\j*2+\k)$);
				\path [draw] ($(t)-(1,0+\j*2)$) -- ($(t)-(2,\j*2)$);
				\path [draw] ($(t)-(2,0+\j*2+\k)$) -- ($(t)-(3,\j*2+\k)$);
			}
		}

		\rainbow{($(gridStartingVertex)+(4*\gridXScale,\gridYScale)$)}{2}	
		\rainbow{($(gridStartingVertex)+(4*\gridXScale,2*\gridYScale)$)}{2}
		
		\gridPath{($(gridStartingVertex)+(4*\gridXScale,\gridYScale)+(2,0)$)}{($(t)-(3,1)$)}
		\gridPath{($(gridStartingVertex)+(4*\gridXScale,2*\gridYScale)+(2,0)$)}{($(t)-(3,-1)$)}

		\rainbow{($(t)-(\arrowBegin+3,-5)$)}{3}
		\rainbow{($(t)-(\arrowBegin+3,5)$)}{3}
		\path ($(t)-(\arrowBegin,-5)$) edge ($(t)-(\arrowBegin-2,-5)$);
		\path ($(t)-(\arrowBegin-2,-3)$) edge ($(t)-(\arrowBegin-2,-5)$);
		\path ($(t)-(\arrowBegin,5)$) edge ($(t)-(\arrowBegin-2,5)$);
		\path ($(t)-(\arrowBegin-2,3)$) edge ($(t)-(\arrowBegin-2,5)$);

		\path (s) edge ($(s)+(0,8)$);
		\path ($(s)+(0,8)$) edge ($(gridStartingVertex)+(0,3*\gridYScale+3)$);
		\path ($(gridStartingVertex)+(0,3*\gridYScale+3)$) edge ($(gridStartingVertex)+(0,3*\gridYScale)$);
		\path (t) edge ($(t)-(0,7)$);
		\path ($(t)-(0,7)$) edge ($(gridStartingVertex)+(4*\gridXScale,-2)$);
		\path ($(gridStartingVertex)+(4*\gridXScale,-2)$) edge ($(gridStartingVertex)+(4*\gridXScale,0)$);

	\end{tikzpicture}
	\subcaption{Sketch of the graph obtained in the proof of Theorem~\ref{lem:holey_np_hard}.}%
	\label{fig:VC_red_constr}
	\end{subfigure}
	\caption{An exemplified illustration of the construction in the proof of Theorem~\ref{lem:holey_np_hard}. \subref{fig:AKT-std-graph} A graph representing an example instance of \Probl{Vertex Cover}. \subref{fig:rainbow} An illustration of (the grid-embedding of) the rainbow gadget. 
	\subref{fig:snakes} An illustration of snake-chains.
	\subref{fig:VC_red_constr} Sketch of the holey grid constructed in the proof of Theorem~\ref{lem:holey_np_hard}, exemplified with the instance represented by~\subref{fig:AKT-std-graph}. The highlighted path indicates the validation path.}
	\label{fig:fig}
\end{figure}

\begin{theorem}
	\label{lem:holey_np_hard}
	\MSE{} on holey grids is $\NP$-hard.%
\end{theorem}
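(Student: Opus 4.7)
The plan is to reduce from \textsc{Vertex Cover} restricted to graphs of maximum degree three, which remains $\NP$-hard~\cite{DBLP:books/fm/GareyJ79}. Given an instance $(G,k')$ with $V(G)=\{v_1,\ldots,v_n\}$ and $E(G)=\{e_1,\ldots,e_m\}$, I will build a holey grid $G'$ together with $s,t\in V(G')$ and integers $p,k$ such that $(G,k')$ is a \emph{yes}-instance of \textsc{Vertex Cover} if and only if $(G',s,t,p,k)$ is a \emph{yes}-instance of \MSE{}. The reduction follows the blueprint of the planarity-hardness proof of Fluschnik et al.~\cite{fluschnik2016planar}, but every gadget now has to be embedded into a rectangular grid with holes. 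I use the two gadget types visible in Figure~\ref{fig:VC_red_constr}: \emph{rainbow gadgets} (Figure~\ref{fig:rainbow}) that realise many internally-disjoint $u$-$v$-paths of equal length inside a rectangular region of the grid by nesting rectangles around a central straight path, and \emph{snake-chains} (Figure~\ref{fig:snakes}) that realise one long path running from the top to the bottom of a column by winding back and forth.

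For the construction, one column is reserved for each vertex $v_i$ (hosting a snake-chain) and one horizontal strip for each edge $e_j=\{v_i,v_{i'}\}$ (crossing the columns of $v_i$ and $v_{i'}$). Rainbow gadgets are glued onto the remaining free endpoints of each strip, and two further large rainbows connect the source $s$ and the target $t$ into the top and bottom of the construction. I set $p\coloneqq m+1$ and pick $k$ to equal $k'$ plus a constant that accounts for the unavoidable sharings inside the $s$- and $t$-side rainbows. By choosing every rainbow's multiplicity $M$ strictly larger than $k$, I may assume that no edge inside any rainbow is ever shared in a solution with at most $k$ shared edges, so paths entering a rainbow gadget are forced to exit at its two designated endpoints.

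With these parameters the $p=m+1$ paths decompose into one \emph{validation path} (highlighted in Figure~\ref{fig:VC_red_constr}) that uses $s$, $t$, and one row of each edge strip, and $m$ \emph{edge paths}, one per $e_j$. The rainbows force each edge path to traverse its strip horizontally and to enter the grid of vertex columns, where it has to commit to going up into the snake-chain of $v_i$ or of $v_{i'}$. Two edge paths that commit to the same vertex $v_i$ must merge on that snake-chain and create shared edges there, whereas paths committing to distinct vertices use disjoint snake-chains. A careful accounting shows that the number of shared edges incurred on the snake-chains equals, up to the additive constant, the number of \emph{distinct} vertices chosen across all edge paths. Thus a vertex cover $U$ of size at most $k'$ yields a solution with at most $k$ shared edges by assigning every edge $e_j$ to an arbitrary endpoint in $U$; conversely, any solution with at most $k$ shared edges yields such a $U$ from the commitments of its edge paths.

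The main obstacle will be the optimality/lower-bound direction: I need to show that the only way to route $p$ paths using at most $k$ shared edges is the intended one. The key tool is an $s$-$t$-cut argument in the spirit of Section~\ref{sec:bounded_grids}: for each row and column of the construction I exhibit $s$-$t$-cuts of size $p-1$ that must contain a shared edge, and I combine these with the rainbow-multiplicity argument that forbids cheap alternative routings through rainbow interiors. Choosing the geometric dimensions of columns, strips, snake-chains and rainbows large enough to accommodate all paths while leaving no unintended shortcuts, and verifying that the whole construction fits into a holey grid of polynomial size, will form the technical bulk of the proof.
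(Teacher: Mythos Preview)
Your proposal diverges substantially from the paper's construction, and the divergence hides a real gap. In the paper, rows of the meta-grid correspond to \emph{vertices} and columns to \emph{edges}; one sets $p = k'M + (|V|-k') + 1$ so that exactly $k'$ rows must carry a ``fat'' bundle of $M$ paths (forcing the long $a$-chains at their ends to be fully shared) while every other row carries a single path. A rainbow sits precisely where vertex $v_i$ is \emph{not} incident to edge $e_j$, capping the throughput there at $M$. The extra $+1$ path is the validation path: it must cross every edge-column, and it can do so for free only along an already-shared $b$-chain in some fat row that has no rainbow in that column, i.e.\ in a row whose vertex covers that edge. The vertex-cover size is thus enforced by the budget on $a$-chains (at most $k'$ on each side can be shared), not by counting shared snake-chains.

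Your scheme instead takes $p=m+1$, one path per edge, and tries to read off the cover size from snake-chain sharings. The central claim---``the number of shared edges incurred on the snake-chains equals, up to an additive constant, the number of distinct vertices chosen''---is false as stated. A snake-chain used by exactly one edge-path produces \emph{no} shared edges, yet that vertex is still ``chosen''. On the path $v_1\text{--}v_2\text{--}v_3\text{--}v_4$, assigning $e_1\mapsto v_1$, $e_2\mapsto v_2$, $e_3\mapsto v_3$ uses three distinct vertices but shares nothing on snake-chains, while the minimum cover has size~$2$. Without an additional mechanism (e.g.\ a path that already occupies one edge of every vertex column, so that the \emph{first} edge-path committing to $v_i$ is what creates a shared edge), your accounting cannot distinguish covers of different sizes. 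Your description of the validation path (``one row of each edge strip'') does not provide that mechanism, and the rainbow parameter $M>k$ plays no discernible role when $p=m+1$. So the forward direction may be salvageable, but the converse---extracting a small cover from a cheap solution---does not follow from what you have written.
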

\begin{proof}
	Given an instance~$\Instance{I}_{\VC} \coloneqq (G = (V,E),k)$ of \Probl{Vertex Cover} with~$\Delta(G)\leq 3$, we compute an equivalent instance~$\Instance{I}_{\operatorname{MSE}} \coloneqq (G'=(V',E'),s,t,p,k')$ of \MSE{} on holey grids in polynomial time.
	We assume that~$\Abs{V}$ is a power of two (otherwise we add isolated vertices until it is).
	Figure~\ref{fig:VC_red_constr} illustrates the graph obtained by applying the following transformation to the graph shown in Figure~\ref{fig:AKT-std-graph}.
	The main part of the construction is a structure we refer to as \emph{meta-grid}.
	The meta-grid encodes the vertex-edge incidence matrix of the original graph.
	We assume that the obtained graph to be embedded as shown in Figure~\ref{fig:VC_red_constr}, which serves as a reference when we use the terms ``\emph{left}'', ``\emph{right}'', ``\emph{up}'', and ``\emph{down}''. For construction purposes, we refer to paths with~$\ell+1$ vertices as \emph{$\ell$-chains} or chains of length~$\ell$. Whenever a chain is added in the construction, all vertices except the two end-vertices are new.
	
	The main component in the construction is a gadget called \emph{rainbow} (cf.~\cite{fluschnik2016planar}), see Figure~\ref{fig:rainbow}.
	Figure~\ref{fig:rainbow} also shows that this gadget is a subgraph of a bounded grid.
	We use rainbow gadgets where the number of vertices in each band in the spectrum of the rainbow is larger than the number of allowed shared edges. %
	This allows the rainbow gadget to restrict the number of paths that can be routed through it to at most the number of bands in the spectrum.
	Note that in any rainbow that is satiated with~$M$ paths~$2M-2$ edges are shared.
	We call the number of shared edges in a rainbow the \emph{rainbow-offset}.
	
	We define~$M$ and a few other values we need in order to build the graph $G'$ in the following:
	\begin{align*}
		M &\coloneqq 2(\Abs{E}+1)+2; \ \ \ \trees \coloneqq 2\cdot (\Abs{V}\cdot \log_2(\Abs{V}) - 2 + 2k); \\
		c &\coloneqq 10; \ \ \ c' \coloneqq  2\Abs{V} + \Abs{E} \cdot \Abs{V} - 2\Abs{E}; \ \ \ b \coloneqq 2\cdot M\cdot c' + 1;\\
		a_0 &\coloneqq \frac{\Abs{V}-1}{2}(M+c-2)-\log{\Abs{V}}; \ \ \ a \coloneqq \max(a_0,\Abs{E}^3,b^2).
	\end{align*}
	Here,~$c'$ is the number of rainbow gadgets we construct.
	The values~$a$,~$a_0$, and~$b$ are chosen to ensure certain constraints when routing paths and sharing edges while~$c$ can be understood as a scaling constant used to avoid intersections. 
	Why the values are chosen in this way will become clear later in the proof.
	Next we set $p \coloneqq k \cdot M + (\Abs{V} - k) +1$ for the number of paths and $k' \coloneqq k\cdot(2a+b\Abs{E}) + \trees + c'(2M-2)$ for the number of shared edges in $\Instance{I}_{\operatorname{MSE}}$. 
	
	In the following we describe the construction of the meta-grid. 
	First, we create a grid of vertices, without any edges, that has~$\Abs{V}$ rows and~$\Abs{E}+1$ columns. We fix an ordering $v_1,\dots,v_{\Abs{V}}$ on the set~$V$ of vertices and use it to identify each row of the grid with a vertex from~$G$. Analogously, we fix an order $e_1,\dots,e_{\Abs{E}}$ on the edge set~$E$ and use it to identify each space between two consecutive columns of the grid with an edge in~$G$. From here on we will refer to these spaces as columns. %
	
	The first vertex in row~$i$ is denoted~$v'_{i,1} \in V'$, refer to Figure~\ref{fig:VC_red_constr}, the second one is denoted~$v'_{i,2}$, and so on. 
	If vertex~$v_i$ is incident to edge~$e_j$ in~$G$, then vertices~$v'_{i,j}$ and~$v'_{i,j+1}$ are connected by a chain of length~$b$.
	If~$v_i$ is not incident to~$e_j$ in~$G$, then vertices~$v'_{i,j}$ and~$v'_{i,j+1}$ are connected by a chain of length~$b$ followed by a rainbow.
	This completes the construction of the rows.
	
	We embed the structure we just created in a grid such that the first vertices are vertically aligned and have vertical distance of~$M + c$.
	Now we connect each vertex~$v'_{i,j}$ with~$i<\Abs{V}$ and~$j<\Abs{E}$ with its respective lower neighbour, that is, vertex~$v'_{i+1,j}$, by so-called \emph{snake-chains} of length at least $k' + 1$ (the wavy vertical lines in Figure~\ref{fig:VC_red_constr}).
	Note that these vertices do not necessarily lie above each other.
	The snake-chains are constructed as follows (refer to \Cref{fig:snakes} for an illustration).

	In every row except the lowest one, we start with the left most snake-chain.
	We first route it four steps down, then~$k'$ steps to the right, one down, left again until we are above its end-vertex which we then join it to by a vertical path.
	Then every further snake-chain is routed the following way: down by the maximum possible number of steps (at most four) such that no previous snake-chain is intersected, then~$k'$ to the right, then the minimum necessary number of steps down, such that the snake-chain can be extended to the left without intersecting a previous snake-chain until it can be routed downwards until it meets its end-vertex.

	Note that the above description implies that we reduce the number of steps that a snake-chains is routed downwards every time the previous column did not contain a rainbow.
	After a rainbow is encountered we start with four steps down again.
	Since $G$ has a maximum degree of three, there are at most three columns in every row without a rainbow, so after at most four consecutive snake-chains we encounter a rainbow in the next column.
	This way the snake-chains do not intersect or touch each other and the constant $c > 2\cdot 4 +1$ ensures that they also do not intersect any rainbows from the next row.

	Now we add a source vertex $s$ to the left of the meta-grid and construct a complete, binary tree of height~$\log_2{\Abs{V}}$ with~$s$ as its root and with~$\Abs{V}$ leaves pointing in direction of the grid.
	We construct this tree in such a way that all vertices of the same level lie in the same column of the grid and from one leaf to the next we have distance two in the grid.
	This is possible since the number of vertices in~$G$ is a power of two.
	To make this tree embeddable into a grid we replace every edge by a chain of the minimal required length running along the grid structure.
	We connect the uppermost leaf to the first row of the meta-grid in a way such that the vertical distance between this leaf and~$v'_{1,1}$ is exactly~$a_0$.
	More specifically, we add a chain up and to the right until it has length~$a$, then add a rainbow of sufficient length and connect it to~$v'_{1,1}$.
	Each leaf of the tree is connected by a chain of length~$a$ and a following rainbow to one of the vertices in the first column of the meta-grid such that the order of the leaves and the vertices is the same.
	The length~$a$ is chosen such that all the chains have the same length.
	To avoid intersections in the $a$-chains these go right first: the chain leading to the row corresponding to~$v_i$ is routed~$i-1$ steps to the right if~$i < \frac{\Abs{V}}{2}$ and $\Abs{V}-i$ otherwise.
	Then the chains go up/down to their row and then right until they have length~$a$.
	Note that this tree is symmetrical in the end since we work on an even number of vertices.
	
	The same is done on the right side: we add a vertex~$t$ and a binary tree to its left with~$t$ being the root and the leaves are connected to the vertices in the last column of the meta-grid by a chain of length~$a$ and a rainbow.
	If the construction of the snake-paths causes some of the snake-paths to ``stick out'' to the right, then we extend the paths in the rainbows at the leaves of~$t$ as far as necessary to ensure that nothing intersects.
	The length of these rainbows is also used to align the leaves of the tree on this side.
	
	Finally, we add chains of length at least~$k'+1$, the \emph{outer-grid chains}, one connecting~$s$ to~$v'_{1,1}$ and the other connecting~$v'_{\Abs{V},\Abs{E}+1}$ to~$t$. 
	
	Intuitively, the correctness is shown as follows. 
	Recall that~$p \coloneqq k \cdot M + (\Abs{V} - k) +1$. 
	We know that we can route at most~$M$ paths through a rainbow, this we have to do $k$ times. 
	So we can pick~$k$ of the $\Abs{V}$ rows and route $M$ paths through each. 
	We route a single path through each of the remaining~$\Abs{V}-k$ rows. 
	Now we have to route one additional path, which has to use the outer-grid chains and the snake-chains. 
	This path will verify that the~$k$ rows we chose to route~$M$ paths through correspond to vertices of~$G$ that constitute a vertex cover. 
	Then each column corresponding to an edge of~$G$ has at least one row where we have a fully shared chain and no rainbow. 
	So the remaining path can be routed through those chains and use the snake-chains to switch between rows. 
	Of course,~$k'$ is chosen in a way that we are forced to use the described approach and that there is no solution if $G$ does not have a vertex cover of size~$k$.
	We claim that $G$ has a vertex cover of size $k$ if and only if $G'$ has $p$ paths from $s$ to $t$ sharing at most $k'$ edges.
	
	\raproof{} 
	Let~$\VC \subseteq V$ be a vertex cover of size at most~$k$. Without loss of generality we assume~$\Abs{\VC} = k$.
	Then for each~$v \in \VC$, we route~$M$ paths from~$s$ via the chain of length~$a$ and through the following rainbow leading to the corresponding vertex~$v'$.
	This way each~$M$ paths cause~$a$ shared edges within the corresponding chain plus the rainbow-offset of~$2M-2$ edges, so in total~$k \cdot (a + 2M-2)$.
	For all other vertices we route only one path this way.
	Doing so we cause every edge in the tree on the side of~$s$ to be shared except on its lowest level, where only~$k$ branches of the tree are shared, notice that this yields exactly~$\trees/2$ shared edges.
	So the paths cause $a \cdot k + k (2M-2) + \trees/2$ shared edges before reaching the meta-grid.
	Next we route all these paths horizontally through our meta-grid yielding another $k \cdot b \cdot \Abs{E} + k \cdot (2M-2)$ shared edges.

	To route the paths from the meta-grid to~$t$ we get additional $k \cdot (a + 2M-2) + \trees/2$ shared edges.
	So we have $k \cdot M + (\Abs{V} - k)$ paths sharing $k\cdot(2a+b\Abs{E}) + \trees + c'(2M-2) \leq k'$ edges.
	The number~$\trees$ therefore describes exactly the number of edges in the trees at~$s$ and~$t$ that are shared.
	
	Next we route one additional path~$P$ without sharing any additional edges.
	Starting at~$s$ we route~$P$ along the outer-grid chain linking~$s$ to~$v'_{1,1}$, the first vertex in the first row.
	From there~$P$ has to pass the columns of the meta-grid.
	Since~$\VC$ is a vertex cover there is a covering vertex for every edge.
	So for every column we take the vertex covering it and route the path to the corresponding row using the snake-chains.
	Since the vertex is part of the vertex cover the chain crossing this column is already shared, so we can use it to route~$P$ on to the next column.
	After~$P$ has crossed the meta-grid in this fashion it can be routed to the lowest row via the snake-chains and then via the outer-grid chain leading to~$t$.
	
	So~$G'$ allows for $k \cdot M + (\Abs{V}-k)+1$ $s$-$t$-paths sharing $k' \leq k\cdot(2a+b\Abs{E}) + \trees + c'(2M-2)$ edges. 
	Hence,~$\Instance{I}_{\operatorname{MSE}}$ a yes-instance.
	
	\laproof{}
	Assume that~$G'$ has a solution~$\Solution$, a set of $s$-$t$-paths with~$\Abs{\Solution} = p$ sharing at most~$k'$ edges.
	First consider how the paths leave~$s$.
	There are~$|V|+1$ ways to get from~$s$ to the meta-grid.
	One of those is the outer-grid chain which can only contain one path since it has length~$k'+1$.
	So~$p-1 = k \cdot M + (\Abs{V}-k)$ paths have to be routed through the tree.
	Also every rainbow can contain at most~$M$ paths.
	And due to~$k'<2a(k+1)$ at most~$2k$ of the $a$-chains in~$G'$ can be shared, which means~$k$ of the $a$-chains connected to~$s$ and~$k$ of the $a$-chains connected to~$t$.
	Consequently, we get that~$k$ of the rainbows connected to the first column of the meta-grid contain~$M$ paths each.
	Let~$v'_{i_1 ,1},\dots,v'_{i_k,1}$ be the first vertices from these rows.
	The other rainbows connected to the first column then contain exactly one path.
	
	Next we consider what happens inside the meta-grid.
	Since the snake-chains are the only way to leave a row and those cannot be shared, at most~$2(\Abs{E}+1)$ paths can leave a row.
	So in the last column at least 2 paths are still routed through the rainbow connected to these rows and then via the $a$-chain to~$t$.
	We can do the same maths as before on the~$t$ side to get that these rainbows have to contain~$M$ paths as well.
	Hence, each of this~$k$ rows induces at least~$2a+b \Abs{E}$ shared edges.
	Also~$\trees$ many shared edges are induced within the trees by construction.
	Note that~$b > c' \cdot (2M-2)$.
	Hence, there are less than~$b$ shared edges left.

	Now let $\VC \coloneqq \Set{v_{i_1},\dots,v_{i_k}}$.
	Assume towards a contradiction that~$\VC$ is not a vertex cover.
	Then there is a column with a rainbow in every of the~$k$ rows corresponding to the vertices in~$\VC$.
	That means that the path that is routed via the outer-grid chain connected to~$s$ cannot pass this column without sharing additional~$b$ edges, but our budget does not suffice for this.
	This yields a contradiction to~$G'$ being a yes-instance.
\end{proof}
\subsection{Manhattan-like Acyclic Digraphs}
\label{sec:manhattan}

In the previous section, we proved that~\MSE{} is $\NP$-hard on holey grids, i.e.\ subgraphs of a bounded grid.
Along the line, in this section we prove that the \emph{directed} version, \DMSE{}, is $\NP$-hard on the graph class of acyclic directed holey grids (we refer to this class by~\emph{Manhattan {DAG}s}).
We remark that inspired by the street design of Manhattan, New York City, directed bounded grids (referred to as \emph{Manhattan street networks}) are considered in the literature, also in the context of routing~\cite{maxemchuk1987routing,Varvarigos98}.

Observe that \MSE{} reduces to \DMSE{} by replacing each edge~$\{u,v\}$ by anti-parallel arcs~$(u,v)$, $(v,u)$.
The correctness of the reduction follows immediately from the following.

\begin{lemma}%
  \label{lem:not_both_dir}
  Let $(G,s,t,p,k)$ be an instance of \DMSE{}.
  If $\Solution$ is a solution for this instance where two paths~$P_A$ and~$P_B$ use $e = (u,v) \in E$ and its inverted arc $e' = (v,u) \in E$, then we can find a solution $\Solution'$ for the same instance that does not use both of these arcs.
\end{lemma}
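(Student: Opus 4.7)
The natural move is a path-swap at the anti-parallel arcs. Decompose $P_A = P_A^{1} \cdot e \cdot P_A^{2}$, where $P_A^{1}$ is the $s$-$u$-prefix of $P_A$ and $P_A^{2}$ is the $v$-$t$-suffix; analogously write $P_B = P_B^{1} \cdot e' \cdot P_B^{2}$, with $P_B^{1}$ an $s$-$v$-prefix and $P_B^{2}$ a $u$-$t$-suffix. Form the two $s$-$t$-walks $W_A := P_A^{1} \cdot P_B^{2}$ (passing through $u$) and $W_B := P_B^{1} \cdot P_A^{2}$ (passing through $v$). Neither walk uses $e$ or $e'$ by construction. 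Shortcut each walk to a simple $s$-$t$-path $P_A'$ and $P_B'$ with $E(P_A') \subseteq E(W_A)$ and $E(P_B') \subseteq E(W_B)$, and set $\Solution' := (\Solution \setminus \{P_A, P_B\}) \cup \{P_A', P_B'\}$. By construction $\Solution'$ consists of $p$ simple $s$-$t$-paths that avoid both $e$ and $e'$.

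It remains to verify that $\Solution'$ has at most as many shared arcs as $\Solution$. Since every path other than $P_A, P_B$ is unchanged, it suffices to show that for every arc $f$, the number of paths among $\{P_A', P_B'\}$ containing $f$ is at most the number among $\{P_A, P_B\}$ containing $f$. This is immediate for $f \in \{e, e'\}$. For any other arc $f$, suppose $f \in E(P_A') \cap E(P_B') \subseteq E(W_A) \cap E(W_B)$. The intersection $E(W_A) \cap E(W_B)$ splits into the four cases $E(P_A^{i}) \cap E(P_B^{j})$ for $i,j \in \{1,2\}$; simplicity of $P_A$ and $P_B$ rules out $E(P_A^{1}) \cap E(P_A^{2})$ and $E(P_B^{1}) \cap E(P_B^{2})$ (which would arise from the remixing), while the two surviving cases yield $f \in E(P_A) \cap E(P_B)$. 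If instead $f$ lies in exactly one of $P_A', P_B'$, then $f \in E(W_A) \cup E(W_B) \subseteq E(P_A) \cup E(P_B)$. In either situation the multiplicity of $f$ does not increase, so the shared-edge set of $\Solution'$ is contained in that of $\Solution$.

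The main obstacle is the walk-to-path shortening: a straight splice produces walks, not simple paths, and one must rule out that shortcutting silently raises the shared-edge count. The containment argument above sidesteps this by bounding multiplicities against the walks $W_A, W_B$ themselves, so that any edges discarded in shortening can only help.
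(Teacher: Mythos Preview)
Your approach is the same as the paper's: split $P_A$ and $P_B$ at the anti-parallel arcs and cross-splice the prefixes and suffixes. Your multiplicity argument for the shared-edge count is in fact more careful than the paper's, which simply asserts that the splice ``retains a solution'' without checking.

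There is one overclaim, however. You write that ``$\Solution'$ consists of $p$ simple $s$-$t$-paths that avoid both $e$ and $e'$,'' but this is only true of $P_A'$ and $P_B'$; the other $p-2$ paths of $\Solution$ are unchanged and may still use $e$ or $e'$. If, say, some $P_C$ also uses $e$ and some $P_D$ also uses $e'$, then after one swap $\Solution'$ still uses both arcs. The paper handles this with a single sentence: repeat the swap on a remaining pair until one of $e,e'$ is used by no path. Your argument already shows each swap does not increase the shared-edge count, and each swap strictly decreases the total number of occurrences of $e$ and $e'$ among the paths, so the iteration terminates. Add that one line and the proof is complete.
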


\appendixproof{Lemma}{lem:not_both_dir}%
{
  \begin{proof}
  The idea is to split both paths by removing those two edges and connect the beginning of the first path to the trailing part of the second path and vice versa, retaining a solution for the instance.
  
  We first introduce some notation: When $P$ is a directed path and $u,v \in P$ occur in that order, then we denote the subpath of $P$ that starts at $u$ and ends at $v$ by $P[u,v]$.
  If $P_1$ and $P_2$ are two paths and the last vertex of $P_1$ is the same as the first vertex of $P_2$, then we write $P_1 \cdot P_2$ for the path resulting by the union of the two paths and removing all cycles.
  
  Assume $(G,s,t,p,k)$ has a solution $\Solution$ with $e\in \Path_A\in\Solution$ and $e'\in\Path_B\in\Solution$.
  Then we can split $P_A$ and $P_B$ into subpaths as follows: $P_A = P_A[s,u] \cdot e \cdot P_A[v,t]$ and $P_B = P_B[s,v] \cdot e' \cdot P_B[u,t]$.

  Now, by replacing path $P_A$ by $P_A' \coloneqq P_A[s,u] \cdot P_B[u,t]$ and $P_B$ by $P_B' \coloneqq P_B[s,v] \cdot P_A[v,t]$,
  we are able to retain a solution in which these two paths do not use $e$ and $e'$ any more.

  We can repeat this until one of the two edges is not used any more by any of the paths in the solution, yielding the desired $\Solution'$.
  \end{proof}
}

However, the directed graph obtained in the reduction is not acyclic.
We show next that \DMSE{} remains hard even on acyclic directed holey grids.
On a high level, we adapt the construction presented in the proof of Theorem~\ref{lem:holey_np_hard}.
We then direct the edges from left to right, from~$s$ towards~$t$.
Finally, we duplicate the horizontal chains (snake chains) and direct one upwards and one downwards.

\begin{theorem}%
  \label{thm:manhattandags}
	\DMSE{} on Manhattan {DAG}s is~$\NP$-hard.
\end{theorem}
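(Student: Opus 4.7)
The plan is to adapt the reduction from \Probl{Vertex Cover} in the proof of Theorem~\ref{lem:holey_np_hard}. Starting from the holey grid $G'=(V',E')$ constructed there, I would orient its edges to obtain a directed acyclic graph embedded in a bounded directed grid, keeping the parameters $p$ and $k'$ unchanged. All ``main skeleton'' edges---the binary tree at $s$, the $a$-chains and rainbows leading into the meta-grid, the horizontal chains of the meta-grid (both the shared ones and those containing rainbows), the $a$-chains and rainbows leading out, the binary tree at $t$, and the two outer-grid chains---are oriented from left to right, i.e.\ from $s$ towards $t$. Under this orientation the $k\cdot M + (|V|-k)$ ``horizontal'' paths from the forward direction of Theorem~\ref{lem:holey_np_hard} are already valid directed $s$-$t$-paths, and the capacity argument bounding how many paths can traverse each rainbow transfers verbatim.

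The remaining issue is the verification path, which in the undirected case uses snake-chains to switch rows and in general must travel both upwards and downwards. The plan is therefore to duplicate each snake-chain between rows $i$ and $i+1$ at column $j$ into two internally vertex-disjoint directed chains of length at least $k'+1$: a ``down'' copy oriented from $v'_{i,j}$ to $v'_{i+1,j}$ and an ``up'' copy oriented from $v'_{i+1,j}$ to $v'_{i,j}$. Since each snake-chain only occupies a narrow strip inside the holey grid, a second internally disjoint copy can be routed alongside in the free grid cells. To prevent trivial length-two cycles between an up-copy and a down-copy that share both endpoints, I would split each junction vertex $v'_{i,j}$ into an ``entry'' and an ``exit'' vertex connected by a single left-to-right directed edge; down-copies then leave an exit and end at the next entry, while up-copies do the symmetric thing with the row roles swapped, so no two antiparallel snake-copies share both endpoints.

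The main obstacle is establishing acyclicity. Any putative directed cycle in the resulting graph would have to combine vertical traversals (up- or down-copies of snake-chains) with horizontal traversals of the skeleton. Since all skeleton edges are oriented strictly left-to-right and horizontal progress along the skeleton is never reversed, a cycle would need at least one skeleton edge used right-to-left, which is impossible. Within a single snake-chain, each copy is itself a directed path, and the entry/exit splitting rules out the only other candidate cycle of length two. Hence the construction yields a Manhattan DAG.

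Correctness then transfers mechanically. For the forward direction, given a vertex cover $\VC$ of size $k$, the same $k \cdot M + (|V|-k)$ horizontal paths as in Theorem~\ref{lem:holey_np_hard} are routed as directed paths, and the additional verification path uses the outer-grid chain out of $s$, picks for each column the covering row and switches rows through either the up- or the down-copy of the appropriate snake-chain, then exits via the outer-grid chain into $t$. For the backward direction, the non-shareability of the length-$(k'+1)$ outer-grid and snake-chains, the rainbow saturation argument, and the $k' < 2a(k+1)$ calculation all carry over unchanged, so a solution yields a vertex cover of size $k$ in $G$.
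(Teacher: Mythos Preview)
Your overall approach matches the paper's: orient the skeleton left-to-right and duplicate each snake-chain into an up-copy and a down-copy, using a single extra horizontal edge at each junction to offset them. However, your specific attachment rule creates a cycle, so the acyclicity argument fails as written.

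With your convention---down-copy from $\mathrm{exit}(i,j)$ to $\mathrm{entry}(i{+}1,j)$ and up-copy from $\mathrm{exit}(i{+}1,j)$ to $\mathrm{entry}(i,j)$---the directed walk
\[
\mathrm{exit}(i,j)\xrightarrow{\text{down}}\mathrm{entry}(i{+}1,j)\xrightarrow{\text{horiz.}}\mathrm{exit}(i{+}1,j)\xrightarrow{\text{up}}\mathrm{entry}(i,j)\xrightarrow{\text{horiz.}}\mathrm{exit}(i,j)
\]
is a cycle. Note that both horizontal edges here are used left-to-right, so your claim that ``a cycle would need at least one skeleton edge used right-to-left'' is false; vertical motion is fully undone by the up-copy while horizontal motion only advances by two entry--exit hops and then returns via the snakes. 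The paper avoids this by attaching the down-copy to the \emph{left} (entry) vertex at both ends and the up-copy to the \emph{right} (exit) vertex at both ends; after going down and back up you land at $\mathrm{exit}(i,j)$, and there is no arc back to $\mathrm{entry}(i,j)$. Any attempt to close a cycle then strictly increases the $x$-coordinate along every segment, which is what actually certifies acyclicity.

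A second, smaller gap: the parameters are \emph{not} unchanged. The entry--exit edges lie on the horizontal row paths, so in each of the $k$ rows carrying $M$ paths these extra edges are shared. The paper accounts for this by replacing $b$ with $b'=b+1$ and setting $k''=k'+k\cdot|E|$; you need the analogous adjustment, and should recheck that the inequality $k'' < 2a(k+1)$ used in the backward direction still holds (it does, but it is not automatic from ``parameters unchanged'').
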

\appendixproof{Theorem}{thm:manhattandags}
{
\begin{proof}
We adapt the reduction for holey grids given in the proof of Theorem~\ref{lem:holey_np_hard}.
To obtain a directed graph, we replace every edge in the corresponding construction by a directed edge:
First we ignore the vertical snake-chains and handle the remaining edges.
Each horizontal edge is directed to the right, rainbows go up, then right, then down.
The tree edges are directed away from $s$ and towards $t$.
Because each snake-chain has to offer both vertical directions in the new construction without creating a cycle, they are constructed as follows.
We replace the undirected snake-chains of the original construction by two snake-chains separated by single edges in their start and end.
The single edges are directed to the right, the left of the new snake-chains is directed downwards and the right one upwards.
We call a snake-chain outgoing from the row in which it starts and ingoing to the row where it ends.
To ensure that we still have enough space for the additional snake-chain, we increase the distance between the rows by defining $c:=20$.

Let $\Instance{I}_{\VC} \coloneqq (G,k)$ be an instance of \Probl{Vertex Cover} and $\Instance{I}_{\operatorname{MSE}}' \coloneqq (G',s,t,p,k')$ be an instance of \MSE{} constructed using the reduction for holey grids.
From this we construct an instance $\Instance{I}_{\operatorname{DMSE}}'' \coloneqq (G'',s,t,p,k'')$ of \DMSE{} according to the additional steps described above.
In the construction for holey grids there are chains of length~$b$ inside the meta-grid.
Due to the additional single edges connecting each pair of snake-chains, we get chains of length~$b' \coloneqq b+1$ in this construction.
Because $b$~is used in the definition of~$k'$, the incrementation of~$b$ leads to~$k'' \coloneqq k' + k \cdot \Abs{E}$.

    We claim that $G$~has a vertex cover of size $k$ if and only if $G''$ has $p$ paths from $s$ to $t$ sharing at most $k''$ edges.
    
    \raproof{}
    This direction of the proof works analogously to the reduction for holey grids using the new instance $\Instance{I}_{\operatorname{DMSE}}''$ instead of $\Instance{I}_{\operatorname{MSE}}'$.
    The first $k \cdot M + (|V|-k)$ paths are routed in the same way. %
    Then again we route one additional path by always choosing the corresponding outgoing snake-chain.
    Since we have a vertex cover, this does not share further edges.

    \laproof{}
    Let $\Solution$ be a solution for $\Instance{I}_{\operatorname{DMSE}}''$.
    The difference to the reduction for holey grids lies in the number of snake-chains that is doubled in the new construction.
    Now, after sharing $\trees$ edges (for definition of $\trees$ see the proof of \Cref{lem:holey_np_hard}) in the binary trees at $s$ and $t$, at most $2k$ of the $a$-chains between the binary trees and the meta-grid in $G''$ can be shared.

    Since only the outgoing snake-chains can be used to leave a row, we can argue again that there are $k$ rows in the meta-grid containing more than one path, each one inducing at least $2a+b' \Abs{E}$ shared edges.
    Let $\VC$ be the set vertices corresponding to these rows.
    If $\VC$ is no vertex cover, then we need to share another $b'$-chain in the construction adding additional $b'$ shared edges, which is a contradiction to $\Instance{I}_{\operatorname{DMSE}}''$ being a yes-instance.
  \end{proof}
}

\section{The Nonexistence of Polynomial Kernels}
\label{sec:polyKernel}
\appendixsection{\ref{sec:polyKernel}}

In this section, we consider \MSE{} from a parametrised complexity point of view.
\MSE{} is kernelisable but does not admit a polynomial problem kernel when it is parametrised by the number~$p$ of paths, unless $\NP \subseteq \coNPPoly$~\cite{fluschnik2015para_MSE}.
We strengthen the latter result and complement the intractability of \MSE{} on planar graphs by showing the following.

\begin{theorem}\label{thm:noPK}
	Unless $\NP \subseteq \coNPPoly$, \MSE{} with parameter $\kappa(G,s,t,p,k) \coloneqq p+k+\Delta(G)+\diamG{G}+\tw{(G)}$ does not admit a polynomial kernel, even on planar graphs.
\end{theorem}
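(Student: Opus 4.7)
The plan is to rule out polynomial kernels by an \OrCross{} from an NP-hard restriction of \MSE{}. As the source problem I would take \MSE{} on planar graphs, which is NP-hard by~\cite{fluschnik2016planar} (alternatively on holey grids by Theorem~\ref{lem:holey_np_hard}, so one may additionally assume source instances have bounded maximum degree). The polynomial equivalence relation~$\polyEquivRel$ groups source instances by their encoded length and by the numerical values of~$(p, k, \Delta, \diamG{}, \tw)$; this is a valid polynomial equivalence relation since only polynomially many such tuples occur for instances of a fixed length.

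Given $t$ $\polyEquivRel$-equivalent source instances $I_1, \ldots, I_t$, I would construct the composed planar instance $I^* = (G^*, s^*, t^*, p^*, k^*)$ from a parallel composition of the source graphs together with a logarithmic-depth \emph{selector} on each side. Concretely, place the $G_j$'s disjointly in the plane and attach, at~$s^*$, a balanced binary tree of depth~$\lceil \log_2 t \rceil$ whose~$t$ leaves are linked via short buffer chains to the source vertices~$s_j$; then mirror the construction at~$t^*$. I would set~$p^* \coloneqq p$ and $k^* \coloneqq k + 2\lceil \log_2 t\rceil$, and preserve planarity by laying the trees out with the source gadgets placed between them.

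Correctness reduces to showing that every solution for~$I^*$ routes all~$p$ paths through a single leaf of each selector, thereby inducing a solution for the corresponding~$I_j$. The key observation is that if the paths split at an internal selector node at depth~$d$, then all tree edges on the root-to-node prefix are shared by all~$p$ paths, and further tree edges on both sides below the split must also be shared, together consuming strictly more than~$\lceil \log_2 t \rceil$ of the selector budget on that side. The reverse direction is immediate: route all~$p$ paths along the single root-to-leaf path leading to the witness~$I_j$, apply its internal solution, and exit symmetrically.

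The parameter bounds come out as $p^* = p$, $k^* = k + O(\log t)$, $\Delta(G^*) = O(1)$ (all selector vertices have degree at most three), $\diamG{G^*} \leq \diamG{G_j} + O(\log t)$, and $\tw(G^*) \leq \max_j \tw(G_j) + O(1)$, because a parallel composition of two-terminal graphs glued at~$\{s^*, t^*\}$ admits a tree decomposition obtained by merging the individual decompositions at a common root bag~$\{s^*, t^*\}$, and adding a tree-shaped selector increases treewidth by only a constant. All parameters are therefore bounded by a polynomial in~$\max_j |I_j| + \log t$, as required by the \OrCross{} framework, which then yields the claimed kernelization lower bound under~$\NP \not\subseteq \coNPPoly$. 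The main obstacle I expect is the shared-edge accounting inside the selector: in \MSE{} a shared edge contributes to~$k$ only once regardless of how many paths traverse it, so the selector must be designed so that every split inevitably creates several distinct shared edges, for instance by replacing each tree edge with a buffer chain of length greater than~$k^*$ so that any shared buffer must be shared across its full length and contributes at least one unit of the budget per level below the split.
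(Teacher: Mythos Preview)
Your overall architecture (OR-cross-composition from planar \MSE{} via two balanced binary selector trees) is the same as the paper's, but the correctness argument for the selector has a real gap, and your proposed fix does not close it.

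The problematic step is the claim that any split at an internal selector node ``consumes strictly more than $\lceil\log_2 t\rceil$'' shared edges. This is false for unbalanced splits: if at some node you send $p-1$ paths down one child and a single path down the other, the singleton branch contributes no shared edge, and the heavy branch contributes exactly one shared edge per remaining level, so the total is still $\lceil\log_2 t\rceil$, the same as routing all $p$ paths to one leaf. Iterating this, one can peel off one path at each level and arrive at a leaf with only $p-\lceil\log_2 t\rceil$ paths, without spending any extra selector budget. Hence the composed instance could be a yes-instance because some $G_j$ admits $p-\lceil\log_2 t\rceil$ paths with $k$ shared edges, while every original $(G_j,s_j,t_j,p,k)$ is a no-instance. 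Your suggested remedy of replacing each tree edge by a chain of length greater than $k^*$ overshoots in the other direction: then \emph{no} selector chain can be shared at all, so at most one path can leave each child of $s^*$, forcing $p^*\le 2$ and making the forward direction impossible.

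The paper resolves exactly this issue, and the fix changes both $p^*$ and $k^*$. Each tree edge is subdivided into a chain of length $k+1$ (not $k^*+1$), and one sets $p' \coloneqq p+\log q$ and $k' \coloneqq 2\log q\,(k+1)+k$. The budget then permits precisely $2\log q$ shared chains, and Lemma~\ref{lem:tree_choice} forces these to form one complete root-to-leaf path in each tree; the extra $\log q$ in $p'$ is what absorbs the at most $\log q$ single paths that may branch off along that shared path, guaranteeing that at least $p$ paths traverse the selected instance. Your parameter bookkeeping (diameter, treewidth) is essentially fine modulo these changes, but the treewidth argument should not refer to gluing at $\{s^*,t^*\}$, since the $G_j$ are attached at the leaves $s_j,t_j$; the paper instead uses planarity plus the diameter bound (or a direct decomposition) to bound~$\tw$.
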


In order to prove Theorem~\ref{thm:noPK}, we use a so-called \OrCross{} due to Bodlaender et al.~\cite{bodlaender2014kernelization}. %
Therein, one uses a \emph{polynomial equivalence relation} $\polyEquivRel$ which is an equivalence relation that is decidable in polynomial time and for each finite set $S$, the number of equivalence classes with respect to $\polyEquivRel$, that is, $\Abs{\Set{[s]_{\polyEquivRel}\mid s \in S}}$, is polynomially bounded in the size of the largest element in $S$.

\begin{definition}[\OrCross{}~\cite{bodlaender2014kernelization}]
	Let $L \mathop{\subseteq} \Sigma^*$ be some problem and $Q \mathop{=} (P,\kappa)$ with $P \mathop{\subseteq} \Sigma^*$ and $\kappa \mathop{:} \Sigma^* \mathop{\to} \N$ be some parametrised problem.
	Furthermore, let $\polyEquivRel$ be a polynomial equivalence relation on $\Sigma^*$.
	An \emph{\OrCross{}} is an algorithm that gets instances $\Instance{I}_1,\ldots,\Instance{I}_q$ of $L$ as input, all of them belonging to the same equivalence class of $\polyEquivRel$,
	and outputs an instance $\Instance I$ of $Q$ such that
	\begin{compactitem}
		\item $\Instance I\in P$ if and only if there is at least one $i$ such that $\Instance I_i\in L$ and
		\item $\kappa(\Instance I)$ is polynomially bounded in $\max\Set{\Abs{\Instance I_i}\mid i=1,\ldots,q} + \log q$.
	\end{compactitem}
\end{definition}

If there is an \OrCross{} from an $\NP$-hard problem~$L$ to some parametrised problem~$Q$, then $Q$ does not admit a polynomial-size kernel, unless~$\NP \subseteq \coNPPoly$~\cite{bodlaender2014kernelization}.
Using this result, we give an \OrCross{} to prove Theorem~\ref{thm:noPK}. 
Our construction contains binary trees and we use the following structural result on binary trees with respect to~\MSE{}.

\begin{lemma}%
	\label{lem:tree_choice}
	Let $T$ be a balanced, binary and complete tree of height $h$ with root~$s$, where additionally all leaves are identified with the target $t$.
	Then the \emph{only} solutions for an \MSE-instance $(T,s,t,p,k)$ with $p\geq h+3$ and $k\leq h$ are to share a complete path from $s$ to some leaf, which is only possible for $k=h$.
\end{lemma}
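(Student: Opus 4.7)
My plan is to invoke the standard criterion recalled at the start of Section~\ref{sec:bounded_grids}: after contracting every shared edge, the resulting graph must carry an $s$-$t$-flow of value at least~$p$. Since $T$ is a tree, any edge-contraction keeps it a tree, and in a tree the maximum $s$-$t$-flow equals the degree of the source; so the argument reduces to bounding this degree.

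Fix a putative solution with shared edge set $S$ of size at most $k$. Let $U$ be the connected component of~$s$ in the subgraph $(V(T),S)$; then $U$ is a subtree of $T$ rooted at~$s$, and the image of~$s$ in the contracted tree $T/S$ has degree equal to the number of edges leaving~$U$ in~$T$. I will show by a direct counting argument that if $t \notin U$, this boundary equals $|E(U)|+2$. The key input is that every non-leaf of $T$ has exactly two children: summing ``children in $T$'' over all $v \in U$ gives $2|U|$, while the downward edges internal to~$U$ account for $|E(U)| = |U|-1$ of those children, so the number of boundary edges is $2|U| - (|U|-1) = |U|+1 = |E(U)|+2$.

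Combining these observations, if $S$ contains no complete $s$-to-leaf path then $t \notin U$, and the degree of (the contracted image of)~$s$ in $T/S$ is at most $|E(U)|+2 \le k+2 \le h+2$. Thus the maximum $s$-$t$-flow in $T/S$ is at most $h+2 < h+3 \le p$, contradicting the flow-contraction criterion. Hence~$S$ must contain a full $s$-$t$-path in~$T$; since the shortest $s$-$t$-path has length exactly $h$ and $|S| \le k \le h$, equality is forced: $|S|=k=h$ and $S$ coincides with the edges of a single $s$-to-leaf path, as claimed.

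The main obstacle, and the only place where the structure of $T$ is genuinely used, is the boundary identity ``$|\partial U| = |E(U)|+2$'' for subtrees $U \ni s$ avoiding~$t$; the rest is the flow-contraction criterion together with the trivial length comparison $|S| \ge h$.
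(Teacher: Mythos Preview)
Your proof is correct and follows essentially the same route as the paper: both bound the degree of the image of~$s$ in $T/S$ by $2+k \le h+2 < p$ and invoke the flow--contraction criterion to conclude that $S$ must contain a full root-to-leaf path, after which the length comparison forces $|S|=k=h$. The paper phrases the degree bound tersely as ``each contraction may only increase the degree by one'', while you make it explicit via the boundary identity $|\partial U| = |E(U)|+2$; this is more careful but not a different idea.

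One terminological slip worth fixing: $T$ is \emph{not} a tree---the leaves are identified with~$t$, so $T$ has $2^h$ distinct $s$-$t$-paths and plenty of cycles---so the sentence ``Since $T$ is a tree, any edge-contraction keeps it a tree, and in a tree the maximum $s$-$t$-flow equals the degree of the source'' is inaccurate as written. Fortunately your argument never actually uses equality: you only need the trivial bound (max $s$-$t$-flow $\le \deg(s)$), which holds in any graph, together with the fact that $T\setminus t$ \emph{is} a tree (so that $U$ is genuinely a rooted subtree when $t\notin U$). Both of these you use correctly in the body of the proof.
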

\appendixproof{Lemma}{lem:tree_choice}
{
  \begin{proof}
      The idea is again that the contraction of the shared edges must allow a flow of value $p$ from $s$ to $t$ in any solution.

      Initially $s$ has degree two.
      Each contraction of an edge may only increase the degree by one.
      So after $k$ contractions the degree is at most $h+2 < p$, except if the contraction identifies $s$ with $t$.
      This can only happen if a complete path from~$s$ to~$t$ is shared.
      Since such a path has length $h$ this is only possible for $k \geq h$.
      Due to the condition $k \leq h$ there are no other shared edges and $k = h$.
  \end{proof}
}
Next we prove the main result of this section.

\begin{proof}[Proof of Theorem~\ref{thm:noPK}]
	We apply the OR-cross-composition framework with \MSE{} on planar graphs where $s$ and $t$ lie on the outer face as input problem.
	The $\NP$-hardness of this problem is shown in Theorem~\ref{lem:holey_np_hard} since in the reduction $s$ and $t$ are on the outer face.

	We say an instance~$(G,s,t,p,k)$ of~\MSE{} is \emph{malformed} if $\distG{s}{t}\leq k$ (trivial yes-instances), if $s$ and $t$ are not connected, if $p \geq 2\cdot |E(G)|$ and $k < \distG{s}{t}$ (trivial no-instances), or if $p\leq 2$.
	Note that in the last case we can decide the instance in polynomial time, since the problem is fixed-parameter tractable with respect to $p$~\cite{fluschnik2015para_MSE}. 
	Hence we can decide each malformed instance in polynomial time.

	We define the equivalence relation $\polyEquivRel$ as follows: two instances~$(G,s,t,p,k)$ and~$(G',s',t',p',k')$ are $\polyEquivRel$-equivalent if both are malformed or if~$p=p'$ and~$k=k'$. 
	Observe that $\polyEquivRel$ is a polynomial equivalence relation.
	
	Let $\Instance{I}_i = (G_i,s_i,t_i,p,k)_{1 \leq i \leq q}$ be non-malformed $\polyEquivRel$-equivalent instances of \MSE{}.
	We assume~$q$ to be a power of 2 (as otherwise we duplicate instances until it is).
	We first construct a complete binary tree $T_s$ rooted in $s$ with depth $\log(q)$ such that the~$s_i$ are the leaves of $T_s$, occurring in their canonical order. 
	Conversely, we construct a tree $T_t$ with root $t$ and leaves~$t_i$.
	We subdivide each edge in $T_s$ and $T_t$ to obtain paths of length $k+1$. 
	In this way we obtain a new graph~$G=(V,E)$ with~$V \coloneqq  V(T_s) \cup V(T_t) \cup \bigcup_{i = 1}^{q} V(G_i)$ and~$E \coloneqq  E(T_s) \cup E(T_t) \cup \bigcup_{i = 1}^{q} E(G_i)$.
	Furthermore, we define the new parameters $p'\coloneqq p+\log(q)$ and $k'\coloneqq 2\log(q)\cdot (k+1) + k$ and get the instance $\Instance{I} \coloneqq (G,s,t,p',k')$, see Fig.~\ref{fig:orCross}.
	Now we claim that $\Instance{I}$ is a yes-instance if and only if there is an $\Instance{I}_y$ with $1 \leq y \leq q$ that is a yes-instance.
	
	\begin{figure}[t]
	\centering
	\begin{tikzpicture}[yscale=0.7]

		\node (I1Label) at (-0.53,0.6) {$\Instance{I}_q$};
		\node[draw,ellipse,text width=15ex,text height=1.5em] (I1) at ($(I1Label)$) {};
		\node (I2Label) at (-0.53,4.4) {$\Instance{I}_1$};
		\node[draw,ellipse,text width=15ex,text height=1.5em] (I2) at ($(I2Label)$) {};
		\node (dots) at (-0.53,2.5) {\Large{\vdots}};

		\node[draw,circle,fill,scale=0.45,label={[yshift=-0.7cm]$s$}] (s) at (-5,2.5) {};
		\node (dotsT) at ($(s)+(1,0)$) {\vdots};
		\node (sl) [draw,circle,fill,scale=0.45] at ($(s)+(1,0.8)$) {};
		\node (sr) [draw,circle,fill,scale=0.45] at ($(s)+(1,-0.8)$) {};
		\node (s2l) [draw,circle,fill,scale=0.45] at ($(sl)+(1,0.5)$) {};
		\node (s2r) [draw,circle,fill,scale=0.45] at ($(sr)+(1,-0.5)$) {};
		\node (s3ll) [draw,circle,fill,scale=0.45,label=right:$s_1$] at ($(s2l)+(1,0.6)$) {};
		\node (s3lr) [draw,circle,fill,scale=0.45,label=right:$s_2$] at ($(s2l)+(1,-0.6)$) {};
		\node (s3rl) [draw,circle,fill,scale=0.45,label=right:$s_{q-1}$] at ($(s2r)+(1,0.6)$) {};
		\node (s3rr) [draw,circle,fill,scale=0.45,label=right:$s_q$] at ($(s2r)+(1,-0.6)$) {};
		\foreach \i in {l,r}
		{
			\draw (s) edge (s\i);
		}
		\foreach \i in {l,r}
		{
			\draw (s\i) edge (s2\i);
		}	
		\foreach \i in {l,r}
			\foreach \j in {l,r}
			{
				\draw (s2\i) edge (s3\i\j);
			}	

		\node[draw,circle,fill,scale=0.45,label={[yshift=-0.7cm]$t$}] (t) at (3.9,2.5) {};
		\node (dotsS) at ($(t)-(1,0)$) {\vdots};
		\node (tl) [draw,circle,fill,scale=0.45] at ($(t)-(1,0.8)$) {};
		\node (tr) [draw,circle,fill,scale=0.45] at ($(t)-(1,-0.8)$) {};
		\node (t2l) [draw,circle,fill,scale=0.45] at ($(tl)-(1,0.5)$) {};
		\node (t2r) [draw,circle,fill,scale=0.45] at ($(tr)-(1,-0.5)$) {};
		\node (t3ll) [draw,circle,fill,scale=0.45,label=left:$t_q$] at ($(t2l)-(1,0.6)$) {};
		\node (t3lr) [draw,circle,fill,scale=0.45,label=left:$t_{q-1}$] at ($(t2l)-(1,-0.6)$) {};
		\node (t3rl) [draw,circle,fill,scale=0.45,label=left:$t_2$] at ($(t2r)-(1,0.6)$) {};
		\node (t3rr) [draw,circle,fill,scale=0.45,label=left:$t_1$] at ($(t2r)-(1,-0.6)$) {};
		\foreach \i in {l,r}
		{
			\draw (t) edge (t\i);
		}
		\foreach \i in {l,r}
		{
			\draw (t\i) edge (t2\i);
		}	
		\foreach \i in {l,r}
			\foreach \j in {l,r}
			{
				\draw (t2\i) edge (t3\i\j);
			}	
		
	\end{tikzpicture}
	\caption{Construction of $\Instance{I}$ via an \OrCross. The instances are connected by complete binary trees with roots $s$ and $t$, respectively.}
	\label{fig:orCross}
	\end{figure}
	In the trees let $P_x$ denote the path from $s$ to $s_x$ in $T_s$ and let $Q_x$ denote the path from $t_x$ to $t$ in $T_t$ for every $1\leq x\leq q$.

	\laproof{} Assume that $\Instance{I}_y$ is a yes-instance with solution $\Solution_y$.
	We route $p$ paths along $P_y$, through $\Instance{I}_y$ and along $Q_y$ sharing at most $k$ edges within $\Instance{I}_y$ and additional $2\log(q)\cdot(k+1)$ edges in the trees.
	Thus in $\Instance{I}$ at most $k'$ edges are shared.
	Note that every other instance $\Instance{I}_i$ allows for at least one path, since~$s$ and~$t$ are connected.
	So from each vertex of~$P_y$, except $s_y$, we route an additional path through one of the remaining instances.
	Thus we get $\log(q)$ additional paths not sharing any additional edge.
	So we have $p'$ paths and therefore a solution for $\Instance{I}$.

	\raproof{} Assume the constructed instance $\Instance{I}$ has a solution $\Solution$.
	Note that in $T_s$ and $T_t$ we share at most $2\log(q)$ of the $(k+1)$-paths, but we have to route $p'\geq \log(q)+3$ paths, since $p\geq 3$.
	By Lemma~\ref{lem:tree_choice} this implies that there exist $1 \leq y,x \leq q$ such that $P_y$ and $Q_x$ are completely shared.
	At each vertex of~$P_y$ only one path from $\Solution$ may branch off, which implies that only $\log(q)$ paths do so.
	Hence, at least $p$ paths are routed through $P_y$, leaving the instance at $t_y$.
	So $Q_y$ must be shared, which implies $y=x$.
	Note that $\Solution$ therefore shares $2\log(q)(k+1)$ outside of $\Instance{I}_y$.
	That leaves $k = k'-2\log(q)(k+1)$ edges that can be shared inside of $\Instance{I}_y$.
	So $\Solution$ restricted to $\Instance{I}_y$ is a solution for $\Instance{I}_y$.
	
	This concludes the proof that $\Instance{I}$ has a solution if and only if there is an $1 \leq i \leq q$ such that $\Instance{I}_i$ has a solution.
	Finally, we observe that $p'$ and $k'$ are polynomially bounded in $p+\log(q)$ and $k+\log(q)$.
	Since we only added binary trees, the maximum degree $\Delta$ is increased by at most two.
	For all $1 \leq x,y \leq q$ and every two vertices in the instances $\Instance{I}_x$ and $\Instance{I}_y$ there is a connecting path via $s_x$-$s$-$s_y$ of length at most $\diamG{G_x} + 2\log(q) + \diamG{G_y}$.
	For all vertices in $T_s$ and $T_t$ we have a connecting path by going to $s$, then through $\Instance{I}_1$ and finally via $t$ to the desired vertex.
	This path has length at most $4\log(q) + \diamG{G_1}$.
	Hence, the diameter of $G$ is at most $4\log(q)+\max\{\diamG{G_i} \mid i=1,\ldots,q\}$, which is polynomial in the input size and $\log(q)$.
	The treewidth of $G$ is upper-bounded by~$3\cdot \diamG{G}$, because the graph is planar \cite{robertson1984GMIII}. 
	(Alternatively, there also is a tree decomposition of $G$ of treewidth at most $2\log(q) + \max\{\operatorname{tw}(G_i):i=1,\ldots,q\}$.)

	So $\kappa(\Instance I)$ is polynomially bounded by $\max\{\kappa(\Instance I_i):i=1,\ldots,q\}+\log(q)$.
	It follows that \MSE{} parametrised by $k + p + \Delta + \diamG{} + \tw$ does not admit a polynomial kernel, even on planar graphs.
\end{proof}

Recall that \DMSE{} is $\NP$-hard on planar acyclic digraphs with~$s$ and~$t$ lying on the outerface~(Theorem~\ref{thm:manhattandags}). 
Hence, replacing the input instances by instances from \DMSE{} on the aforementioned graphs, and directing the remaining edges in the trees away from~$s$ and towards~$t$ allows us to also exclude polynomial kernels for \DMSE{} parametrised\footnote{\DMSE{} is in $\FPT$ when parametrised by~$p+k$ since the search tree algorithm solving \MSE{} in $O((p-1)^k\cdot (\Abs{V}+\Abs{E})^2)$ time~\cite{fluschnik2015para_MSE} can easily be adapted to the directed case.} by~$p+k+\Delta_{\rm in}(G)+\Delta_{\rm out}(G)$. 

\begin{corollary}
  Unless~$\NP \subseteq \coNPPoly$, \DMSE{} on planar acyclic digraphs with parameter $\kappa(G,s,t,p,k) \coloneqq p+k+\Delta_{\rm in}(G)+\Delta_{\rm out}(G)$ does not admit a polynomial kernel.
\end{corollary}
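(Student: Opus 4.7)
The plan is to mimic the \OrCross{} from the proof of Theorem~\ref{thm:noPK}, but starting from \DMSE{} on planar acyclic digraphs with~$s$ and~$t$ on the outer face, which is $\NP$-hard by Theorem~\ref{thm:manhattandags} (the construction there is easily seen to place~$s$ and~$t$ on the outer face). We declare two instances $\polyEquivRel$-equivalent if both are malformed (trivial or decidable in polynomial time with respect to the input size, e.g.\ $p\leq 2$) or if they agree on $p$ and $k$. Given $q$ non-malformed $\polyEquivRel$-equivalent instances $\Instance{I}_i=(G_i,s_i,t_i,p,k)$, we again attach two complete binary trees~$T_s$ and~$T_t$, each of depth~$\log q$, whose leaves are the~$s_i$ respectively~$t_i$, and subdivide every tree-edge into a path of length~$k+1$. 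The orientation is the only new point: orient every edge of~$T_s$ away from~$s$ and every edge of~$T_t$ towards~$t$. Together with the acyclicity of each~$G_i$, this yields an acyclic digraph. Planarity of the composed graph and the placement of~$s$ and~$t$ on the outer face follow by the same planar embedding as in the undirected proof. Finally, set $p'\coloneqq p+\log q$ and $k'\coloneqq 2\log q\cdot(k+1)+k$.

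The equivalence between ``some~$\Instance{I}_y$ is a yes-instance'' and ``the composed $(G,s,t,p',k')$ is a yes-instance'' is shown exactly as in the proof of Theorem~\ref{thm:noPK}. The \laproof{} direction only uses that every~$\Instance{I}_i$ admits at least one $s_i$-$t_i$-path (which holds since no instance is malformed) and that paths in~$\Instance{I}_y$ can be prefixed and suffixed by $P_y\subseteq T_s$ and $Q_y\subseteq T_t$; this is valid because our tree-edge orientations make $P_y$ a directed $s$-$s_y$-path and $Q_y$ a directed $t_y$-$t$-path. The \raproof{} direction needs the directed analogue of Lemma~\ref{lem:tree_choice}: in an out-tree from~$s$ of depth~$h$ each contraction can increase the out-degree of~$s$ by at most one, so after $k\leq h$ contractions the out-degree of~$s$ remains below the number of paths unless an entire root-to-leaf path is shared. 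The same flow-based argument as in the undirected case then forces a single shared path~$P_y$ in~$T_s$ and~$Q_x$ in~$T_t$; the $\log q$ extra paths can branch off at internal vertices of~$P_y$ since each of those has out-degree two. Having entered~$\Instance{I}_y$ at~$s_y$, the remaining $p$ paths must leave at some $t_x$; by the same counting argument applied to~$T_t$ (with its edges reversed, so that in-degree of~$t$ plays the role of out-degree of~$s$) we must have $x=y$, and the restriction of the solution to~$\Instance{I}_y$ witnesses that it is a yes-instance.

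The parameter accounting is essentially unchanged. Clearly $p'$ and $k'$ are polynomial in $p+k+\log q$. Adding~$T_s$ increases the in-degree of every subdivided vertex by at most~$1$ and the out-degree by at most~$2$ (at~$s$ itself), and the leaves~$s_i$ gain in-degree~$1$; symmetric bounds hold for~$T_t$. Hence $\Delta_{\rm in}(G)$ and $\Delta_{\rm out}(G)$ exceed $\max_i\Delta_{\rm in}(G_i)$ and $\max_i\Delta_{\rm out}(G_i)$ by at most~$2$. Therefore $\kappa(\Instance{I})=p'+k'+\Delta_{\rm in}(G)+\Delta_{\rm out}(G)$ is polynomially bounded in $\max_i|\Instance{I}_i|+\log q$, completing the \OrCross{}.

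The step I expect to be the most delicate is the directed variant of Lemma~\ref{lem:tree_choice}: one has to observe that after contracting any edge in the out-tree the out-degree of~$s$ grows by at most one, so the degree-counting argument carries over without change; the fact that our tree-edges are subdivided $(k+1)$-paths (rather than single edges) is what forces any shared edge used to ``escape'' the tree to already lie on a fully shared root-to-leaf path, just as in the undirected case.
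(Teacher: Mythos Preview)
Your proposal is correct and follows essentially the same approach as the paper: take the \OrCross{} from Theorem~\ref{thm:noPK}, replace the input problem by \DMSE{} on planar acyclic digraphs with $s,t$ on the outer face (Theorem~\ref{thm:manhattandags}), and orient the tree edges away from~$s$ and towards~$t$. Your additional care in spelling out the directed analogue of Lemma~\ref{lem:tree_choice} and the in-/out-degree bounds is more detailed than the paper's one-sentence justification, but the underlying construction and argument are the same.
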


\section{Conclusion}
\label{sec:conclusion}
On the positive side, we proved that \MSElong{} on bounded grids is solvable in linear time when both dimensions are either small or large compared to the number~$p$ of paths.
On the negative side, we proved that \MSE{} becomes $\NP$-hard on subgraphs of the bounded grid, even if the subgraph is directed and acyclic, and that it does not allow for polynomial kernels on planar graphs when parametrised by a  combined parameter~$k+p+\Delta+\diamG{}+\tw$, unless $\NP \subseteq \coNP/\text{poly}$.

We conjecture that \MSE{} on $p$-narrow grids is solvable in polynomial time.
In particular, we find it interesting whether an arithmetic criterion similar to the $p$-large case~(cf.~Lemma~\ref{lem:constr_large_k}) exists.
Furthermore, in our reduction from \Probl{Vertex Cover}, the construction yields a grid with a large amount of edges removed by taking a subgraph.
Is \MSE{} parametrised by the number of edges removed from the grid in FPT (or even admits a polynomial-size problem kernel)?

We consider it as interesting to study \DMSE{} on \emph{Manhattan street networks} (cf.~\cite{maxemchuk1987routing}).
Recently, \MSE{} is considered with an additional time-aspect~\cite{Mor16}.
Herein, on a high level, an edge is shared if it appears in at least two paths at the same time.
Another future research direction could be to study \MSE{} with the additional time-aspect on grid-like graphs.

\bibliographystyle{plain}
\bibliography{planarMSE-arxiv-V2}

\appendix

\end{document}